\newtheorem{lemma}{Lemma}
\newtheorem{theorem}{Theorem}
\newtheorem{corollary}{Corollary}
\begin{document}
\title{Sub-Nyquist Cyclostationary Detection for Cognitive Radio} 
\author{Deborah Cohen, \emph{Student IEEE}
        and Yonina C. Eldar, \emph{Fellow IEEE} \thanks{
This project is funded by the European Union's Horizon 2020 research and innovation program under grant agreement No. 646804-ERC-COG-BNYQ, and by the Israel Science Foundation under Grant no. 335/14. Deborah Cohen is grateful to the Azrieli Foundation for the award of an Azrieli Fellowship. }}
\maketitle

\begin{abstract}
Cognitive Radio requires efficient and reliable spectrum sensing of wideband signals. In order to cope with the sampling rate bottleneck, new sampling methods have been proposed that sample below the Nyquist rate. However, such techniques decrease the signal to noise ratio (SNR), deteriorating the performance of subsequent energy detection. Cyclostationary detection, which exploits the periodic property of communication signal statistics, absent in stationary noise, is a natural candidate for this setting. 
In this work, we consider cyclic spectrum recovery from sub-Nyquist samples, in order to achieve both efficiency and robustness to noise. To that end, we propose a structured compressed sensing algorithm, that extends orthogonal matching pursuit to account for the structure imposed by cyclostationarity. Next, we derive a lower bound on the sampling rate required for perfect cyclic spectrum recovery in the presence of stationary noise and show that it can be reconstructed from samples obtained below Nyquist, without any sparsity constraints on the signal. In particular, in non sparse settings, the cyclic spectrum can be recovered at $4/5$ of the Nyquist rate. If the signal of interest is sparse, then the sampling rate may be further reduced to $8/5$ of the Landau rate.
Once the cyclic spectrum is recovered, we estimate the number of transmissions that compose the input signal, along with their carrier frequencies and bandwidths. Simulations show that cyclostationary detection outperforms energy detection in low SNRs in the sub-Nyquist regime. This was already known in the Nyquist regime, but is even more pronounced at sub-Nyquist sampling rates.
\end{abstract}

\IEEEpeerreviewmaketitle

\section{Introduction}

Spectrum sensing has been thoroughly investigated in the signal processing literature. Several sensing schemes have been proposed, with different performance and complexity levels. The simplest approach is energy detection \cite{Urkowitz_energy}, which does not require any a priori knowledge on the input signal. Unfortunately, energy detection is very sensitive to noise and performs poorly in low signal to noise ratio (SNR) regimes. In contrast, matched filter (MF) detection \cite{MF1, MF2}, which correlates a known waveform with the input signal to detect the presence of a transmission, is the optimal linear filter for maximizing SNR in the presence of additive stochastic noise. This technique requires perfect knowledge of the potential received transmission. A compromise between both methods is cyclostationary detection \cite{Gardner_review, cyclo_review3, GardnerBook}. This approach is more robust to noise than energy detection but at the same time only assumes the signal of interest exhibits cyclostationarity.

Cyclostationary processes have statistical characteristics that vary periodically, arising from the underlying data modulation mechanisms, such as carrier modulation, periodic keying or pulse modulation. The cyclic spectrum, a characteristic function of such processes, exhibits spectral peaks at certain frequency locations called cyclic frequencies, which are determined by the signal parameters, particularly the carrier frequency and symbol rate \cite{GardnerBook}. When determining the presence or absence of a signal, cyclostationary detectors exploit one fundamental property of the cyclic spectrum: stationary noise and interference exhibit no spectral correlation. Non-stationary interference can be distinguished from the signal of interest provided that at least one cyclic frequency of the signal is not shared with the interference \cite{GardnerBook}. This renders such detectors highly robust to noise and interference.

The traditional task of spectrum sensing has recently been facing new challenges due, to a large extent, to cognitive radio (CR) applications \cite{Mitola}. Today, CRs are perceived as a potential solution to the spectrum over-crowdedness issues, bridging between the scarcity of spectral resources and their sparse nature \cite{cr_review}. Even though most of the spectrum is already owned and new users can hardly find free frequency bands, various studies \cite{Study1, Study2, study3} have shown that it is typically significantly underutilized. CRs allow secondary users to opportunistically use the licensed spectrum when the corresponding primary user (PU) is not active \cite{Mitola}. CR requirements dictate new challenges for its most crucial task, spectrum sensing. On the one hand, detection has to be performed in real time, efficiently and with minimal software and hardware resources. On the other hand, it has to be reliable and able to cope with low SNR regimes. 

Nyquist rates of wideband signals, such as those CRs deal with, are high and can even exceed today's best analog-to-digital converters (ADCs) front-end bandwidths. In addition, such high sampling rates generate a large number of samples to process, affecting speed and power consumption. In order to efficiently sample sparse wideband signals, several new sampling methods have recently been proposed \cite{Mishali_multicoset, Mishali_theory, MagazineMishali, SamplingBook} that reduce the sampling rate in multiband settings below the Nyquist rate. This alleviates the burden on both the analog and digital sides by enabling the use of cheaper and lower power reduced rate ADCs and the processing of fewer samples.

The authors of \cite{Mishali_multicoset} derive the minimal sampling rate allowing for perfect signal reconstruction in noise-free settings and provide specific sampling and recovery techniques. However, when the final goal is spectrum sensing and detection, reconstructing the original signal is unnecessary. Power spectrum reconstruction from sub-Nyquist samples is considered in \cite{Davies, Leus, cohen2013cognitive}. 
These works seek power spectrum estimates from low rate samples, using multicoset sampling \cite{Davies, Leus, cohen2013cognitive} and the modulated wideband converter (MWC) \cite{cohen2013cognitive} proposed in \cite{Mishali_theory}. The presence or absence of a signal in a particular frequency band is then assessed with respect to the estimated power within the band. Unfortunately, the sensitivity of energy detection used in the above works is amplified when performed on sub-Nyquist samples due to noise aliasing \cite{Castro}. Therefore, this scheme fails to meet CR performance requirements in low SNR regimes. On the other hand, little a priori knowledge can be assumed on the received signals, making MF difficult to implement. Consequently, cyclostationary detection is a natural candidate for spectrum sensing from sub-Nyquist samples in low SNRs. 

Signal detection using cyclostationarity and its application to spectrum sensing for CR systems in the Nyquist regime, has been thoroughly investigated; see e.g., \cite{cyclo_review1, cr_review, cyclo_review2, cyclo_review3}. Recently, cyclostationary detection from sub-Nyquist samples was treated in \cite{ICC12, Palicot1, Palicot2, TianFeature, TianFeatureJ, cabric_cyc, TianLeus, LeusCyclo, McClellan}. A general framework is adopted, that exploits a linear relation between the sub-Nyquist and Nyquist samples, over a finite sensing time. In particular, a transformation between the Nyquist cyclic spectrum and the time-varying correlations of the sub-Nyquist samples is derived to retrieve the former from the latter. In \cite{ICC12}, the carrier frequencies, symbol periods and modulation types of the transmissions are assumed to be known. In this case, the cyclic spectrum can be reduced to its potential non zero cyclic frequencies, which are recovered using simple least squares (LS). However, this scheme cannot be applied in the context of blind spectrum sensing for CR.

The authors in \cite{Palicot1, Palicot2} consider low-pass compressive measurements of the correlation function. The cyclic autocorrelation is then reconstructed at a given lag using compressed sensing (CS) algorithms \cite{CSBook}. Two heuristic detection techniques are developed in order to infer the presence or absence of a transmission. However, it is not clear how the signal itself should be sampled in order to obtain these correlation measurements. In addition, no requirements on the number of samples for recovering the cyclic autocorrelation or guarantees on the detection methods are provided.

In \cite{TianFeature, TianFeatureJ}, the authors consider a random linear relation between the sub-Nyquist samples autocorrelation and Nyquist cyclic spectrum and formulate a $\ell_1$-norm regularized LS problem, enforcing sparsity on the latter. The same ideas are adopted in \cite{cabric_cyc} but the reconstruction is performed in matrix form, allowing for higher resolution. In \cite{TianLeus}, correlation lags beyond lag zero are exploited and the span of the random linear projections is extended beyond one period of cyclostationarity of the signal. These two extensions allow for simple LS recovery without any sparsity requirements on the signal.

The main drawback of this digital approach, adopted by all works above, is that it does not deal with the sampling scheme itself, since we do not have access to the Nyquist samples. It simply assumes that the sub-Nyquist samples can be expressed as random linear projections of the Nyquist samples. In addition, due to the inherent finite sensing time, the recovered cyclic frequencies lie on a predefined grid. Therefore, in the above works, the frequencies of interest are assumed to lie on that grid. The theoretical resolution that can be achieved is thus dictated by the sensing time. Moreover, no theoretical guarantees on the minimal sampling rate allowing for perfect recovery of the cyclic spectrum have been given. 

In \cite{LeusCyclo}, a concrete sampling scheme is considered, known as multicoset, or non-uniform sampling. The authors derive conditions on the system matrix to have full rank, allowing for perfect cyclic spectrum recovery from the compressive measurements. Random sampling in the form of a successive approximation ADC (SAR-ADC) architecture \cite{McClellanSAR} is used in \cite{McClellan}. The resulting sampling matrix is similar to multicoset, with the distinction that the grid is that of the quantization time rather than the Nyquist grid. While explicit sampling schemes are considered here, the theoretical cyclic spectrum resolution still depends on the sensing time, so that the gridding, or discretization, is part of the theoretical derivations.

In this work, we propose to reconstruct the signal's cyclic spectrum from sub-Nyquist samples obtained using the methods of \cite{Mishali_theory, Mishali_multicoset, MagazineMishali}. Our theoretical approach does not involve gridding or discretization and the cyclic spectrum can be recovered at any frequency. In addition, the MWC analog front-end presented in \cite{Mishali_theory} is a practical sampling scheme that has been implemented in hardware \cite{mwc_hardware}. We perform cyclostationarity detection on the sub-Nyquist samples, thereby obtaining both an efficient, fast and frugal detector on the one hand and one that is reliable and robust to noise on the other. We derive a sampling rate bound allowing for perfect cyclic spectrum recovery in our settings, for sparse and non sparse signals. We note that the cyclic spectrum can be perfectly recovered in the presence of stationary noise, from compressed samples, except for a limited number of cyclic frequencies that are multiples of the basic low sampling rate. For those, the reconstruction is performed in the presence of bounded noise. 

In particular, we show that, in the presence of stationary noise, the cyclic spectrum can be reconstructed from samples obtained at $4/5$ of the Nyquist rate, without any sparsity assumption on the signal. If the signal of interest is sparse, then the sampling rate can be further reduced to $8/5$ of the Landau rate, which is the Lebesgue measure of the occupied bandwidth \cite{LandauCS}. Similar results were observed in \cite{cohen2013cognitive} in the context of power spectrum reconstruction of stationary signals. There, it was shown that the power spectrum of non sparse signals can be retrieved at half the Nyquist rate and that of sparse signals can be perfectly recovered at the Landau rate. Once the cyclic spectrum is reconstructed, we apply our feature extraction algorithm, presented in \cite{liad_cyclo} in the Nyquist regime, that estimates the number of transmissions and their respective carrier frequencies and bandwidths. 

The main contributions of this paper are as follows:
\begin{itemize}
\item \textbf{Low rate sampling and digital processing} - the cyclic spectrum is recovered directly from sub-Nyquist samples. Both sampling and digital processing are performed at a low rate.
\item \textbf{Structured CS algorithm} - an orthogonal matching pursuit (OMP) based algorithm to reconstruct the cyclic spectrum is proposed that exploits the inherent structure of the correlation matrices between frequency samples of the signal.
\item \textbf{Robust detection in low SNR} - we show that cyclostationary detection performed by estimating the transmissions carrier frequency and bandwidth is more robust to noise than energy detection at sub-Nyquist rates. 
\item \textbf{Minimal sampling rate derivation} - a lower bound on the sampling rate required for cyclic spectrum recovery is derived for both sparse and non sparse signals.
\end{itemize}

This paper is organized as follows. In Section \ref{sec:model}, we describe the cyclostationary multiband model. Sections \ref{sec:samp} and \ref{sec:rec} present the sub-Nyquist sampling stage and cyclic spectrum reconstruction algorithm and conditions, respectively. Numerical experiments are presented in Section \ref{sec:sim}.

\section{Cyclostationary Multiband Model}
\label{sec:model}

\subsection{Multiband Model}
\label{sec:model1}

Let $x(t)$ be a real-valued continuous-time signal, supported on $\mathcal{F} = [-1/2T_{\text{Nyq}}, +1/2T_{\text{Nyq}}]$ and composed of up to $N_{\text{sig}}$ uncorrelated cyclostationary transmissions corrupted by additive noise, such that 
\begin{equation}
\label{eq:xmodel}
x(t)=\sum_{i=1}^{N_{\text{sig}}} s_i(t)+n(t).
\end{equation}
Here $n(t)$ is a wide-sense stationary bandpass noise and $s_i(t)$ is a zero-mean cyclostationary bandpass process, as defined below, from the class of pulse-amplitude modulation (PAM) signals:
\begin{multline} \label{eq:smodel}
s_i(t)=\sqrt{2} \cos (2 \pi f_i t) \sum_k a_{ik}^I g_i(t-kT_i) \\ - \sqrt{2} \sin (2 \pi f_i t) \sum_k a_{ik}^Q g_i(t-kT_i).
\end{multline}
The unknown symbols modulating the in-phase and quadrature components are denoted $\{a_{ik}^I\}$ and $\{a_{ik}^Q\}$, respectively, and $g_i(t)$ are the unknown pulse shape functions. The single-sided bandwidth, the carrier frequency and the symbol period are denoted by $B_i$, $f_i$ and $T_i$, respectively. Special cases of passband PAM include phase-shift keying (PSK), amplitude and phase modulation (AM-PM) and quadrature amplitude modulation (QAM) \cite{CommBook}.

Formally, the Fourier transform of $x(t)$, defined by
\begin{equation}
X(f)=\lim_{T \rightarrow \infty} \frac{1}{\sqrt{T}}\int_{-T/2}^{T/2}x(t)e^{-j2\pi f t} \mathrm{d} t,
\end{equation}
is zero for every $f \notin \mathcal{F}$. We denote by $f_{\text{Nyq}} = 1/T_{\text{Nyq}}$ the Nyquist rate of $x(t)$. The number of transmissions $N_{\text{sig}}$, their carrier frequencies, bandwidths, symbol rates and modulations, including the symbols $\{a_{ik}\}$ and the pulse shape functions $g_i(t)$ are unknown, namely the reconstruction of the cyclic spectrum, defined in the next section, is performed in a blind scenario. The single-sided bandwidth of each transmission is only assumed to not exceed a known maximal bandwidth $B$, namely $B_i \leq B$ for all $1 \leq i \leq N_{\text{sig}}$. If the bandwidth is fully occupied, then $N_{\text{sig}} B$ is on the order of $f_{\text{Nyq}}$.  

We will consider the special case of sparse multiband signals as well and show that the sampling rate for perfect cyclic spectrum reconstruction can be further reduced. In this setting, the number of transmissions $N_{\text{sig}}$ which dictates the signal sparsity, or at least an upper bound on it, is assumed to be known and $N_{\text{sig}} B \ll f_{\text{Nyq}}$. We denote by $K=2N_{\text{sig}}$ the upper bound on the number of occupied bands, to account for both the positive and negative frequency bands for each signal.


\subsection{Cyclostationarity}
\label{sec:cyclo_bg}
A process $s(t)$ is said to be cyclostationary with period $T_0$ in the wide sense if its mean $\mathbb{E}[s(t)] = \mu_s(t)$ and autocorrelation $\mathbb{E}[s(t-\tau/2)s(t+\tau/2)] = R_s(t,\tau)$ are both periodic with period $T_0$ \cite{Gardner_review}:
\begin{equation}
\mu_s(t+T_0)= \mu_s(t), \qquad R_s(t+T_0,\tau) = R_s(t,\tau).
\end{equation}
Given a wide-sense cyclostationary random process, its autocorrelation $R_s(t,\tau)$ can be expanded in a Fourier series
\begin{equation}
R_s(t,\tau) = \sum_{\alpha}R_s^{\alpha}(\tau) e^{j 2\pi \alpha t},
\end{equation}
where $\alpha=m/T_0, m \in \mathbb{Z}$ and the Fourier coefficients, referred to as cyclic autocorrelation functions, are given by
\begin{equation}
\label{cyclic_auto} R_s^{\alpha}(\tau) = \frac{1}{T_0} \int_{-T_0/2}^{T_0/2} R_s(t,\tau)  e^{-j 2\pi \alpha t}\mathrm{d}t.
\end{equation}
The cyclic spectrum is obtained by taking the Fourier transform of (\ref{cyclic_auto}) with respect to $\tau$, namely
\begin{equation}
\label{SCF} S_s^{\alpha}(f)=\int_{-\infty}^{\infty} R_s^{\alpha}(\tau) e^{-j 2\pi f \tau}\mathrm{d}\tau,
\end{equation}
where $\alpha$ is referred to as the cyclic frequency and $f$ is the angular frequency \cite{Gardner_review}. If there is more than one fundamental cyclic frequency $1/T_0$, then the process $s(t)$ is said to be polycyclostationary in the wide sense. In this case, the cyclic spectrum contains harmonics (integer multiples) of each of the fundamental cyclic frequencies \cite{GardnerBook}. These cyclic frequencies are related to the transmissions carrier frequencies and symbol rates as well as the modulation type.

An alternative interpretation of the cyclic spectrum, which we will exploit, expresses it as the cross-spectral density $S_s^{\alpha}(f)=S_{uv}(f)$ of two frequency-shifted versions of $s(t)$, $u(t)$ and $v(t)$, such that
\begin{eqnarray}
u(t) & \triangleq & s(t) e^{-j \pi \alpha t}, \\
v(t) & \triangleq & s(t) e^{+j \pi \alpha t}.
\end{eqnarray}
Then, from \cite{Papoulis}, it holds that
\begin{equation}
\label{eq:scf2}
S_s^{\alpha}(f)=S_{uv}(f)=\mathbb{E} \left[ S \left(f+\frac{\alpha}{2} \right) S^*\left(f-\frac{\alpha}{2}\right)\right].
\end{equation}

The Fourier transform of a wide-sense stationary process $n(t)$ is a white noise process, not necessarily stationary, so that \cite{Papoulis}
\begin{equation}
\label{eq:papoulis_statio}
\mathbb{E} \left[ N(f_1) N^*(f_2)\right] = S_n^{0}(f_1) \delta(f_1 -f_2).
\end{equation}
Therefore, stationary noise exhibits no cyclic correlation \cite{GardnerBook}, that is
\begin{equation}
S_n^{\alpha}(f)=0, \quad \alpha \neq 0.
\end{equation}
This property is the motivation for cyclostationary detection, in low SNR regimes in particular.

\subsection{Cyclic Spectrum of Multiband Signals}

Denote by $[f_i^{(1)}, f_i^{(2)}]$ the right-side support of the $i$th transmission $s_i(t)$. Then, $B_i=f_i^{(2)}-f_i^{(1)}$ and $f_i=(f_i^{(1)}+f_i^{(2)})/2$. The support region in the $(f,\alpha)$ plane of the cyclic spectrum $S^{\alpha}_{s_i}(f)$ of such a bandpass cyclostationary signal is composed of four diamonds, as shown in Fig.~\ref{fig:diamonds}. More precisely, it holds that \cite{GardnerBook}
\begin{equation}
\label{eq:supp}
S^{\alpha}_{s_i}(f)=0, \quad \text{for } \left||f|-\frac{|\alpha|}{2}\right| \leq f_i^{(1)} \text{or } |f| + \frac{|\alpha|}{2} \geq f_i^{(2)}.
\end{equation}
Moreover, since $x(t)$ is bandlimited to $\mathcal{F}$, it follows that \cite{GardnerBook}
\begin{equation}
\label{eq:low}
S^{\alpha}_{x}(f)=0, \quad \text{for } \quad |f| + \frac{|\alpha|}{2} \geq \frac{f_{\text{Nyq}}}{2}.
\end{equation}

\begin{figure}[tb]
  \begin{center}
    \includegraphics[width=0.5\columnwidth]{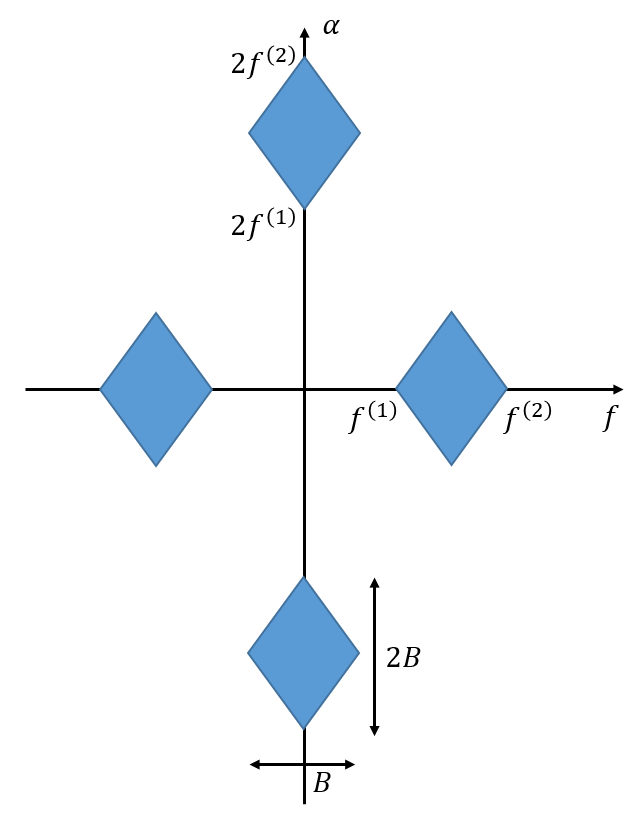}
    \caption{Support region of the cyclic spectrum of $S^{\alpha}_{s_i}(f)$.}
    \label{fig:diamonds}
  \end{center}
\end{figure}

Since the transmissions $s_i(t)$ are assumed to be zero-mean and uncorrelated (coming from different sources), the cyclic spectrum of $x(t)$ does not contain any additional harmonics which would result from correlation between different transmissions. It is thus given by
\begin{equation}
\label{eq:cyc_noise}
S_x^{\alpha}(f)= \left\{ \begin{array}{ll} \sum\limits_{i=1}^{N_{\text{sig}}} S^{\alpha}_{s_i}(f) & \alpha \neq 0 \\
\sum\limits_{i=1}^{N_{\text{sig}}} S^{0}_{s_i}(f) + S_n^0(f) & \alpha = 0. \end{array} \right.
\end{equation}
At the cyclic frequency $\alpha = 0$, the cyclic spectrum reduces to the power spectrum and the occupied bandwidth along the angular frequency axis is $2N_{\text{sig}}B$, to account for both positive and negative frequency bands. This cyclic frequency $\alpha=0$ contains the noise power spectrum as well. As a consequence, we choose to detect the transmissions and estimate their carriers and bandwidths at cyclic frequencies $\alpha \neq 0$. Note that, for $\alpha \neq 0$, the sum only contains the contributions of the transmissions that exhibit cyclostationarity at the corresponding cyclic frequency $\alpha$. Therefore, for each $\alpha \neq 0$, the sum typically contains less than $N_{\text{sig}}$ non zero elements.
It follows from (\ref{eq:cyc_noise}) that, besides the noise contribution at the cyclic frequency $\alpha = 0$, the support of $S^{\alpha}_{x}(f)$ is composed of $4N_{\text{sig}}$ diamonds, that is four diamonds for each transmission, as those shown in Fig.~\ref{fig:diamonds}.

The support of $S_x^{\alpha}(f)$ consists of two types of correlations: the two diamonds lying on the angular frequency $f$ axis and those lying on the cyclic frequency $\alpha$.
The diamonds on the $f$ axis contain self-correlations between a band and its shifted version. These correspond to cyclic peaks at locations $(f, \alpha)$, with $f \in \mathcal{F}$ and $0 \leq |\alpha| \leq B$. Cyclic features at these locations are the result of the transmissions symbol rate $1/T_i$. The value of $T_i$ can be derived from the bandwidth, as
\begin{equation}
T_i=\frac{1+\gamma_i}{B_i},
\end{equation}
where $0 \leq \gamma_i \leq 1$ is the unknown excess-bandwidth parameter of $g_i(t)$ \cite{CommBook}. Therefore, cyclic peaks that stem from the symbol rate appear at cyclic frequencies $\alpha=\frac{1}{T_i}\leq B_i \leq B$. 

The two diamonds lying on the $\alpha$ axis contain cross-correlations between two symmetric bands, belonging to the same transmission. These correspond to cyclic peaks at locations $(f, \alpha)$, with $0 \leq |f| \leq B/2$ and $B < |\alpha| \leq f_{\text{Nyq}}$. In particular, applying (\ref{eq:scf2}) for $\alpha=\pm 2f_i$, namely
\begin{equation}
S_x^{\pm 2f_i}(f)=\mathbb{E} \left[ X(f+f_i) X^*(f-f_i)\right],
\end{equation}
computes the correlation between the positive and negative bands of the $i$th transmission. This generates a peak, or cyclic feature, in the cyclic spectrum at location $(f=0, \alpha=\pm2f_i)$. By detecting the peak location, one can estimate the carrier frequency of the corresponding transmission $s_i(t)$. 
Furthermore, from (\ref{eq:supp}), it is clear that the occupied bandwidth in the cyclic spectrum at $\alpha=\pm 2f_i$ is equal to the bandwidth $B_i$. These observations are the key to estimating the transmissions carrier frequency and bandwidth.
We note that other modulations not included in (\ref{eq:smodel}) may not exhibit features at $\alpha=\pm 2f_i$. In such cases, higher order cyclostationary statistics may be used for detection \cite{GardnerBook, lim2015compressive}.

\subsection{Goal}

Our objective is to reconstruct $S_x^{\alpha}(f)$ from sub-Nyquist samples without any \emph{a priori} knowledge on the support and modulations of $s_i(t), 1 \leq i \leq N_{\text{sig}}$. We show that the cyclic spectrum of non sparse signals can be recovered from samples obtained at $4/5$ of the Nyquist rate and if the signal is assumed to be sparse, then the sampling rate can be as low at $8/5$ of the Landau rate. 

We then estimate the number of transmissions $N_{\text{sig}}$ present in $x(t)$, their carrier frequencies $f_i$ and their bandwidths $B_i$. Once the signals' carrier frequency and bandwidth are estimated, the occupied support is determined. In \cite{liad_cyclo}, we proposed a generic feature extraction algorithm for the estimation of $N_{\text{sig}}$, $f_i$ and $B_i$, for all $1 \leq i \leq N_{\text{sig}}$, from the cyclic spectrum obtained from Nyquist samples. Here, we apply the same scheme to the reconstructed cyclic spectrum from sub-Nyquist rather than Nyquist samples.

\section{Sub-Nyquist Sampling}
\label{sec:samp}

In this section, we briefly describe the sub-Nyquist sampling schemes we adopt. We consider two different sampling methods: multicoset sampling \cite{Mishali_multicoset} and the MWC \cite{Mishali_theory} which were previously proposed for sparse multiband signals in conjunction with energy detection. We show that both techniques lead to identical expressions of the signal cyclic spectrum in terms of correlations between the samples. Therefore, the cyclic spectrum reconstruction stage presented in Section \ref{sec:rec} can be applied to either of the sampling approaches.

\subsection{Multicoset sampling}
\label{sec:multico}
Multicoset sampling \cite{Bresler} can be described as the selection of certain samples from the uniform grid. More precisely, the uniform grid is divided into blocks of $N$ consecutive samples, from which only $M$ are kept. The $i$th sampling sequence is defined as
\begin{equation}
x_{c_i}[n]= \left\{ \begin{array}{ll}
x(nT_{\text{Nyq}}), & n=mN+c_i, m \in \mathbb{Z} \\
0, & \text{otherwise},
\end{array} \right.
\end{equation}
where $0 < c_1 < c_2 < \dots < c_M < N-1$. 
Multicoset sampling can be implemented as a multi-channel system with $M$ channels, each composed of a delay unit corresponding to the coset $c_i$ followed by a low rate ADC.
Let $f_s = \frac{1}{NT_{\text{Nyq}}} \ge B$ be the sampling rate of each channel and $\mathcal{F}_s=[0, f_s]$.
Following the derivations in \cite{Mishali_multicoset}, we obtain
\begin{equation}
\mathbf{z}(\tilde{f}) = \mathbf{A} \mathbf{x}(\tilde{f}), \qquad \tilde{f} \in \mathcal{F}_s,
\label{eq:multico}
\end{equation}
where $\mathbf{z}_i(\tilde{f}) = X_{c_i}(e^{j2\pi \tilde{f} T_{\text{Nyq}}}), 0 \le i \le M-1$, are the discrete-time Fourier transforms (DTFTs) of the multicoset samples and
\begin{equation}
\mathbf{x}_k(\tilde{f})=X\left(\tilde{f}+K_kf_s \right), \quad 1 \le k \le N,
\label{xdef}
\end{equation}
where $K_k = k-\frac{N+1}{2}, 1 \le k \le N$ for odd $N$ and $K_k = k-\frac{N+2}{2}, 1 \le k \le N$ for even $N$. Each entry of $\mathbf{x}(f)$ is referred to as a slice of the spectrum of $x(t)$.
The $ik$th element of the $M \times N$ matrix $\mathbf{A}$ is given by
\begin{equation}
\mathbf{A}_{ik} = \frac{1}{NT_{\text{Nyq}}} e^{j\frac{2 \pi}{N} c_i K_k},
\end{equation}
namely $\bf A$ is determined by the known sampling pattern $\{c_i\}_{i=1}^M$.

\subsection{MWC sampling}
The MWC \cite{Mishali_theory} is composed of $M$ parallel channels. In each channel, an analog mixing front-end, where $x(t)$ is multiplied by a mixing function $p_i(t)$, aliases the spectrum, such that each band appears in baseband, as illustrated in Fig.~\ref{fig:zAx}. The mixing functions $p_i(t)$ are required to be periodic with period $T_p$ such that $f_p=1/T_p \ge B$.
The function $p_i(t)$  has a Fourier expansion
\begin{equation}
p_i(t) =\sum_{l=-\infty}^{\infty} c_{il} e^{j\frac{2\pi}{T_p} lt}.
\end{equation}
In each channel, the signal goes through a lowpass filter with cut-off frequency $f_s/2$ and is sampled at the rate $f_s \ge f_p $, resulting in the samples $y_i[n]$. For the sake of simplicity, we choose $f_s=f_p$. From \cite{Mishali_theory}, the relation between the known DTFTs of the samples $y_i[n]$ and the unknown $X(f)$ is given by
\begin{equation} \label{eq:mwc}
\mathbf{z}(\tilde{f})=\mathbf{A}\mathbf{x}(\tilde{f}), \qquad \tilde{f} \in \mathcal{F}_s,
\end{equation}
where $\mathbf{z}(\tilde{f})$ is a vector of length $M$ with $i$th element $\mathbf{z}_i(\tilde{f})=Y_i(e^{j2\pi \tilde{f}T_s})$. The entries of the unknown vector $\mathbf{x}(\tilde{f})$ are given by (\ref{xdef}). The $M \times N$ matrix $\mathbf{A}$ contains the known coefficients $c_{il}$ such that
\begin{equation}
\mathbf{A}_{il} = c_{i,-l}=c^*_{il},
\end{equation}
where $N=\left\lceil f_{\text{Nyq}}/f_s \right\rceil$.
This relation is illustrated in Fig.~\ref{fig:zAx}.
\begin{figure}[h]
	\begin{center}
		\includegraphics[width=1\columnwidth]{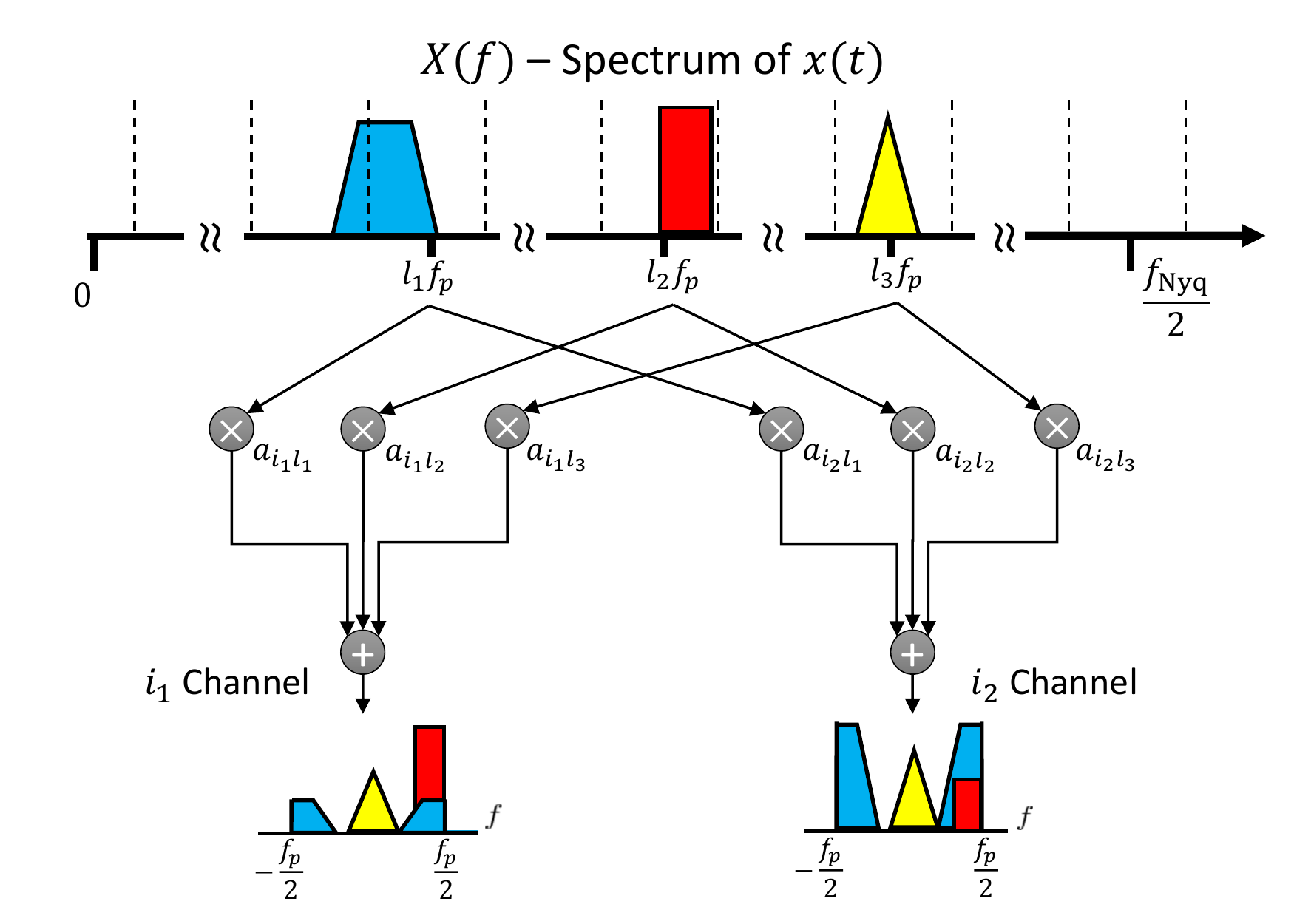}
		\caption{The spectrum slices $\mathbf{x}(\tilde{f})$ of the input signal are shown here to be multiplied by the coefficients $a_{il}$ of the sensing matrix $\mathbf{A}$, resulting in the measurements $\mathbf{z}_i(\tilde{f})$ for the $i$th channel. Note that in multicoset sampling, only the slices' complex phase is modified by the coefficients $a_{il}$. In the MWC, both the phases and amplitudes are affected in general.}
		\label{fig:zAx}
	\end{center}
\end{figure}

Systems (\ref{eq:multico}) and (\ref{eq:mwc}) are identical for both sampling schemes: the only difference is the sampling matrix $\mathbf{A}$. In the next section, we derive conditions for the reconstruction of the cyclic spectrum from either class of samples. The requirements on the resulting sampling matrix $\bf A$ are tied to conditions on the sampling pattern for multicoset sampling \cite{Mishali_multicoset} and on the mixing sequences for the MWC \cite{Mishali_theory}. We then present a method for reconstruction of the analog cyclic spectrum for both sampling schemes jointly. In particular, we will reconstruct $S_x^{\alpha}(f)$ from correlations between shifted versions of $\mathbf{z}(\tilde{f})$, defined in (\ref{eq:multico}) and (\ref{eq:mwc}). We note that for both sampling approaches, the overall sampling rate is
\begin{equation}
f_{tot}=Mf_s=\frac{M}{N}f_{\text{Nyq}}.
\end{equation}
In the simulations, we consider samples obtained using the MWC. However, multicoset samples can be used indifferently.

\section{Cyclic Spectrum Reconstruction}
\label{sec:rec}

In this section, we provide a method to reconstruct the cyclic spectrum $S_x^{\alpha}(f)$ of $x(t)$ from sub-Nyquist samples obtained using either of the sampling schemes described above, namely multicoset and the MWC. We also investigate cyclic spectrum recovery conditions, that is the minimal sampling rate allowing for perfect recovery of $S_x^{\alpha}(f)$ in the presence of stationary noise. Finally, we consider the special case of power spectrum reconstruction, presented in \cite{cohen2013cognitive}, and compare it to cyclic spectrum in terms of both recovery method and conditions.

\subsection{Relation between Samples and Cyclic Spectrum}

From (\ref{eq:multico}) or (\ref{eq:mwc}), we have
\begin{equation}
\mathbf{R}_\mathbf{z}^a(\tilde{f}) = \mathbf{A} \mathbf{R}_\mathbf{x}^a(\tilde{f}) \mathbf{A}^H, \quad \tilde{f} \in \left[0, f_s-a \right],
\label{eq:autoco2}
\end{equation}
for all $a \in [0, f_s]$, where
\begin{equation}
\mathbf{R}_\mathbf{x}^a(\tilde{f}) = \mathbb{E} \left[ \mathbf{x}(\tilde{f}) \mathbf{x}^H(\tilde{f}+a) \right],
\end{equation}
and
\begin{equation} \label{eq:rza}
\mathbf{R}_\mathbf{z}^a(\tilde{f})  = \mathbb{E} \left[\mathbf{z}(\tilde{f}) \mathbf{z}^H(\tilde{f}+a) \right].
\end{equation}
Here $(.)^H$ denotes the Hermitian operation. The entries in the matrix $\mathbf{R}_\mathbf{x}^a(\tilde{f})$ are correlations between shifted versions of the slices $\mathbf{x}(\tilde{f})$, namely correlations between frequency-shifted versions of $x(t)$. The variable $a$ controls the shift between the slices, while $\tilde{f}$, obtaining values in the interval $[0, f_s-a]$, determines the specific frequency location within the slice. Both $a$ and $\tilde{f}$ are continuous and $\mathbf{R}_\mathbf{x}^a(\tilde{f})$ can be computed for any combination of $a \in [0,f_s]$ and $\tilde{f} \in [0, f_s-a]$. In practice, with a limited sensing time, $a$ and $\tilde{f}$ are discretized according to the number of samples per channel.

To proceed, we begin by investigating the link between the cyclic spectrum $S_x^{\alpha}(f)$ and the shifted correlations between the slices $\mathbf{x}(\tilde{f})$, namely the entries of $\mathbf{R}_{\mathbf{x}}^a(\tilde{f})$. We then show how the latter can be recovered from $\mathbf{R}_{\mathbf{z}}^a(\tilde{f})$ using (\ref{eq:autoco2}).


The alternative definition of the cyclic spectrum (\ref{eq:scf2}) implies that the elements in the matrix $\mathbf {R}_\mathbf{x}^a(\tilde{f})$ are equal to $S_x^{\alpha}(f)$ at the corresponding $\alpha$ and $f$. Indeed, it can easily be shown that
\begin{equation}
\label{eq:mapping}
\mathbf{R}_\mathbf{x}^a(\tilde{f})_{(i,j)} = S_x^{\alpha}(f),   \\ 
\end{equation}
for
\begin{eqnarray}
\label{eq:mapping_param}
\alpha&=&(j-i)f_s+a \nonumber \\ f&=&-\frac{f_{\text{Nyq}}}{2} + \tilde{f}- \frac{f_s}{2}+ \frac{(j+i)f_s}{2}+\frac{a}{2}. 
\end{eqnarray}
Here $\mathbf{R}_\mathbf{x}^a(\tilde{f})_{(i,j)}$ denotes the $(i,j)$th element of $\mathbf{R}_\mathbf{x}^a(\tilde{f})$.
In particular, from (\ref{eq:cyc_noise}), it follows that for $a=0$, namely with no shift, it holds that 
\begin{equation} \label{eq:cyc0}
\mathbf{R}_\mathbf{x}^0(\tilde{f})=\sum\limits_{k=1}^{N_{\text{sig}}} \mathbf{R}_{\mathbf{s}_k}^0(\tilde{f}) + \mathbf{R}_\mathbf{n}^0(\tilde{f}).
\end{equation}
Here, $\mathbf{R}_\mathbf{n}^0(\tilde{f})$ is a diagonal matrix that contains the noise's power spectrum, such that
\begin{equation}
\mathbf{R}_\mathbf{n}^0(\tilde{f})_{kk}=S_n^0 \left( -\frac{f_{\text{Nyq}}}{2} +kf_s+\tilde{f} \right),
\end{equation}
for $0 \leq k \leq N-1$, and
\begin{equation}
\mathbf{R}_{\mathbf{s}_k}^a(\tilde{f}) =\mathbb{E} \left[ \mathbf{s}_k(\tilde{f}) \mathbf{s}_k^H(\tilde{f}+a) \right],
\end{equation}
where $\mathbf{s}_k(\tilde{f})$ is a vector of size $N$ whose non zero elements are the frequency slices from $\mathbf{x}(\tilde{f})$ corresponding to the $k$th transmission.
The diagonal of $\mathbf{R}_\mathbf{x}^0(\tilde{f})_{kk}$, for $0 \leq k \leq N-1$, contains the power spectrum of $x(t)$ such that
\begin{equation}
\mathbf{R}_\mathbf{x}^0(\tilde{f})_{kk}=S_x^0 \left( -\frac{f_{\text{Nyq}}}{2} +kf_s+\tilde{f} \right),
\end{equation}
is the sum of the transmissions' and noise's power spectrum. Since $n(t)$ is stationary, for $a \neq 0$, we have
\begin{equation}
\mathbf{R}_\mathbf{x}^a(\tilde{f})=\sum\limits_{k=1}^{N_{\text{sig}}} \mathbf{R}_{\mathbf{s}_k}^a(\tilde{f}).
\end{equation}
Our goal can then be stated as recovery of $\mathbf{R}_\mathbf{x}^a(\tilde{f})$, for $a \in [0,f_s]$ and $\tilde{f} \in [0, f_s-a]$, since once $\mathbf{R}_\mathbf{x}^a(\tilde{f})$ is known, $S_x^{\alpha}(f)$ follows for all $(\alpha,f)$, using (\ref{eq:mapping}).

We now consider the structure of the autocorrelation matrices $\mathbf{R}_\mathbf{x}^a(\tilde{f})$, which is related to the support of the cyclic spectrum $S_x^{\alpha}(f)$. In Section \ref{sec:cyclo_bg}, we discussed the support of $S_x^{\alpha}(f)$, composed of two types of correlations: self-correlations between a band and its shifted version and cross-correlations between shifted versions of symmetric bands belonging to the same transmission. Consider a given frequency location $\tilde{f}$ and shift $a$. The frequency component $\mathbf{x}_i(\tilde{f})$, for $0 \leq i \leq N-1$, can be correlated to at most two entries of $\mathbf{x}(\tilde{f}+a)$, one from the same band and one from the symmetric band. The correlated component can be either in the same, respectively symmetric, slice or in one of the adjacent slices. This follows from the fact that each band may split between two slices at most, since we require $f_p \geq B$. Thus, the first correlated entry is either $i$ or $i \pm 1$ and the second is either $N-i$, $N-i+1$ or $N-1+2$. Since the noise is assumed to be wide-sense stationary, from (\ref{eq:papoulis_statio}), a noise frequency component is correlated only with itself. Thus, $n(t)$ can contribute non-zero elements only on the diagonal of $\mathbf{R}_\mathbf{x}^0(\tilde{f})$.

Figures~\ref{fig:schema_0} and \ref{fig:schema_a} illustrate these correlations for $a=0$ and $a=f_s/2$, respectively. First, in Fig.~\ref{fig:schema_x}, an illustration of the spectrum of $x(t)$, namely $X(f)$, is presented for the case of a sparse signal buried in stationary bandpass noise. It can be seen that frequency bands of $X(f)$ either appear in one $f_p$-slice or split between two slices. The resulting vector of spectrum slices $\mathbf{x}(f)$ and the correlations between these slices without any shift, namely $\mathbf{R}_\mathbf{x}^0(\tilde{f})$, are shown in Figs.~\ref{fig:schema_0}(a) and (b), respectively. Define the $d$-diagonal and $d$-anti diagonal of an $N \times N$ matrix to be its $(i,j)$th elements such that $j=i+d$, and $j=N-i+1-d$, respectively. In particular, the $0$-diagonal stands for the main diagonal and the $0$-anti diagonal is the secondary diagonal. In Fig.~\ref{fig:schema_0}(b), we observe that self-correlations appear only on the main diagonal since every frequency component is correlated with itself. In particular, the main diagonal contains the noise's power spectrum (in green). Cross-correlations between the yellow symmetric triangles appear in the $0$-anti diagonal, whereas those of the blue trapeziums are contained in the $-1$ and $1$-anti diagonals. The red rectangles do not contribute any cross-correlations for a shift of $a=0$. Figures~\ref{fig:schema_a}(a) and (b) show the vector $\mathbf{x}(\tilde{f})$ and its shifted version $\mathbf{x}(\tilde{f}+a)$ for $a=f_s/2$, respectively. The resulting correlation matrix $\mathbf{R}_\mathbf{x}^a(\tilde{f})$ appears in Fig.~\ref{fig:schema_a}(c). Here, the self correlations of the blue trapezium appear in the $-1$-diagonal. The non zero cross-correlations all appear in the anti-diagonal, for the shift $a=f_s/2$. We note that for this shift, the yellow triangles do not contribute self or cross correlations.

\begin{figure}
  \begin{center}
    \includegraphics[width=0.9\columnwidth]{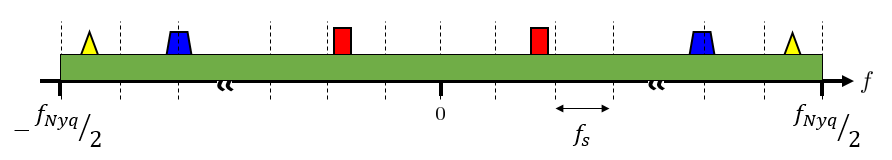}
    \caption{Original spectrum $X(f)$.}
    \label{fig:schema_x}
  \end{center}
\end{figure}

\begin{figure}
  \begin{center}
    \includegraphics[width=0.85\columnwidth]{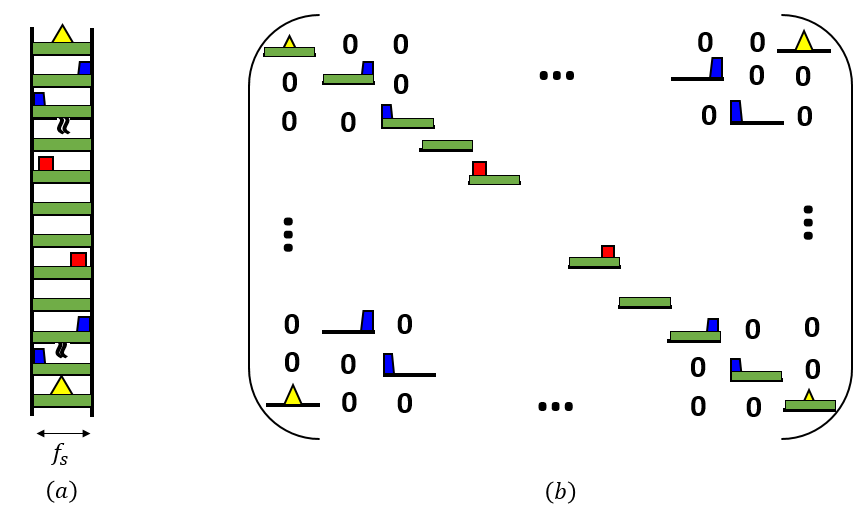}
    \caption{(a) Spectrum slices vector $\mathbf{x}(\tilde{f})$ (b) correlated slices of $\mathbf{x}(\tilde{f})$ in the matrix $\mathbf{R}_\mathbf{x}^0(\tilde{f})$.}
    \label{fig:schema_0}
  \end{center}
\end{figure}

\begin{figure}
  \begin{center}
    \includegraphics[width=0.85\columnwidth]{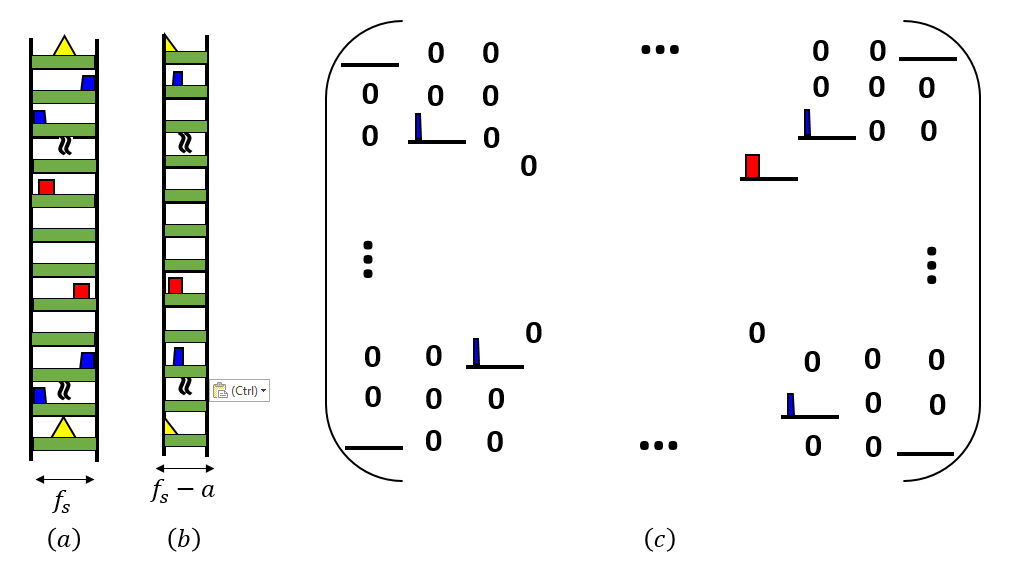}
    \caption{(a) Spectrum slices vector $\mathbf{x}(\tilde{f})$ (b) spectrum slices shifted vector $\mathbf{x}(\tilde{f}+a)$ for $a=f_s/2$ (c) correlated slices of $\mathbf{x}(\tilde{f})$ and $\mathbf{x}(\tilde{f}+a)$ in the matrix $\mathbf{R}_\mathbf{x}^a(\tilde{f})$, with $a =f_s/2$.}
    \label{fig:schema_a}
  \end{center}
\end{figure}

The following four conclusions can be drawn from the observations above on the structure of $\mathbf{R}_\mathbf{x}^a(\tilde{f})$, for a given $a \in ]0,f_s]$ and $\tilde{f} \in [0,f_s-a]$. We will treat the case where $a=0$ separately since it yields a different structure due to the presence of noise.
\begin{itemize}
\item \textbf{Conclusion 1:} The non zeros entries of $\mathbf{R}_{\mathbf{x}}^a(\tilde{f})$ are contained in its $-1$, $0$ and $1$-diagonals and $-1$, $0$ and $1$-anti diagonals.



\item \textbf{Conclusion 2:} The $i$th row of $\mathbf{R}^a_{\mathbf{x}}(\tilde{f})$ contains at most two non zero elements at locations $\left(i,g(i)\right)$ and $\left(i,g(N-i+1)\right)$, where
\begin{equation} \label{eq:g_def}
g(i)= \left\{ \begin{array}{ll} i \text{ or } i \pm 1 & 2 \leq i \leq N-1, \\
i \text{ or } i+1 & i=1, \\
i \text{ or } i-1  & i=N.
\end{array} \right.
\end{equation}
\item \textbf{Conclusion 3:} The $j$th column of $\mathbf{R}_{\mathbf{x}}^a(\tilde{f})$ contains at most two non zero elements at locations $\left(g(j),j) \right)$ and $\left(g(N-j+1), j \right)$, for $1 \leq j \leq N$.
\item \textbf{Conclusion 4:} For each specific frequency $\tilde{f}$, a transmission contributes to at most two slices, one in the negative and one in the positive frequencies. This is due to the assumption that $f_p \geq B$. Therefore, $\mathbf{R}_\mathbf{x}^a(\tilde{f})$ contains at most $K=2N_{\text{sig}}$ rows/columns that have non zero elements. Without any sparsity assumption, it is obvious that $K=N$ even if the number of transmissions is greater than $N/2$.
\end{itemize}
From \textbf{Conclusions 2} and \textbf{4}, it follows that $\mathbf{R}_\mathbf{x}^a(\tilde{f})$ is $2K$-sparse and has additional structure described in \textbf{Conclusions 1-3}.

Since the non zero elements of $\mathbf{R}_\mathbf{x}^a(\tilde{f})$ only lie on the 3 main and anti-diagonals, (\ref{eq:autoco2}) can be further reduced to
\begin{equation}
\mathbf{r}_{\mathbf{z}}^a(\tilde{f}) = \mathbf{(\bar{A} \otimes A)}
\text{vec} (\mathbf{R}_{\mathbf{x}}^a(\tilde{f}))  =  \mathbf{(\bar{A} \otimes A)} \mathbf{B}\mathbf{r}_{\mathbf{x}}^a(\tilde{f})  \triangleq \mathbf{\Phi} \mathbf{r}_{\mathbf{x}}^a(\tilde{f}) ,
\label{eq:rzrx}
\end{equation}
where $\bf \bar{A}$ denotes the conjugate matrix of $\bf A$ and
\begin{equation} \label{eq:Phi_def}
\bf \Phi=(\bar{A} \otimes A)B.
\end{equation}
Here $\otimes$ is the Kronecker product, $\mathbf{r}_{\mathbf{z}}^a(\tilde{f})  = \text{vec}(\mathbf{R}_{\mathbf{z}}^a(\tilde{f}))$, where $\text{vec}(\cdot)$ denotes the column stack concatenation operation, and $\bf B$ is a $N^2 \times (6N-4)$ selection matrix that selects the elements of the $-1$, $0$ and $1$-diagonals and anti-diagonals of $\mathbf{R}_\mathbf{x}^a(\tilde{f})$ from the vector $\text{vec} (\mathbf{R}_{\mathbf{x}}^a(\tilde{f}))$. The resulting $(6N-4) \times 1$ vector composed of these selected elements is denoted by $\mathbf{r}_{\mathbf{x}}^a(\tilde{f})$.

From \textbf{Conclusions 2-4}, the vector $\mathbf{r}_{\mathbf{x}}^a(\tilde{f})$ is $2K$-sparse and its support presents additional structure. Denote by $\Sigma_k$ the set of $k$-sparse vectors that belong to a linear subspace $\Sigma$.
In our case, $\mathbf{r}_{\mathbf{x}}^a(\tilde{f})$ belongs to $\Sigma_k=\Sigma_{2K}$, defined as
\begin{equation}
\label{eq:Sig_def}
\Sigma_{2K} = \{ \mathbf{x} \in \mathbb{R}^{(6N-4) \times 1} | \left|S(\mathbf{x})\right| \leq 2K \wedge S(\mathbf{x})  \in \mathcal{I} \},
\end{equation}
where $S(\mathbf{x})$ denotes the support of $\mathbf{x}$ and the set $\mathcal{I}$ is determined by the following properties:
\begin{enumerate}
\item If the group indexed by $j$ is in $\mathcal{I}$, then the group indexed by $(N+1-j)$ is in $\mathcal{I}$ as well. The groups are defined in the following items.
\item If the group indexed by $2 \leq j \leq N-1$ is in $\mathcal{I}$, then it means that at most one of the entries $\{6j-1, 6j, 6j+1\}$ and at most one of the entries $\{6j+2, 6j+3, 6j+4\}$ are in $\mathcal{I}$. 
\item If the group indexed by $j=1$ is in $\mathcal{I}$, then at most one of the entries $\{1, 2\}$ and at most one of the entries $\{3, 4\}$ are in $\mathcal{I}$. 
\item If the group indexed by $j=N$ is in $\mathcal{I}$, then at most one of the entries $\{6N-7, 6N-6\}$ and at most one of the entries $\{6N-5, 6N-4\}$ are in $\mathcal{I}$. 
\end{enumerate}
Item 1) follows from \textbf{Conclusion 4} and items 2)-3) and 4) follow from \textbf{Conclusions 2-3}.
We note that the indexation above is valid if the selection matrix in (\ref{eq:rzrx}) selects the elements of $\text{vec} (\mathbf{R}_{\mathbf{x}}^a(\tilde{f}))$ row by row. Any other selection order would yield a different indexation.


Consider now the case in which $a=0$. The matrix $\mathbf{R}_{\mathbf{x}}^0(\tilde{f})$ contains the power spectrum of the transmissions and the noise on its main diagonal and cyclic components of the transmissions on its $-1$, $0$ and $1$-anti diagonals. The remaining elements are zero.
Denote by $\mathbf{r}_{\mathbf{x}_1}^0$ the $N \times 1$ non-sparse vector composed of the diagonal of $\mathbf{R}_{\mathbf{x}}^0(\tilde{f})$ and by $\mathbf{r}_{\mathbf{x}_2}^0$ the $(3N-2) \times 1$ sparse vector composed of its $-1$, $0$ and $1$-anti diagonals. Let $\mathbf{\Phi}_1= \mathbf{\bar{A}} \odot \mathbf{A}$ where $\odot$ denotes the Khatri-Rao product and $\mathbf{\Phi}_2= (\mathbf{\bar{A}} \otimes \mathbf{A})\mathbf{\tilde{B}}$ with $\bf \tilde{B}$ a $N^2 \times (3N-2)$ selection matrix that selects the elements of the $-1$, $0$ and $1$-anti diagonals of $\mathbf{R}_\mathbf{x}^0(\tilde{f})$ from the vector $\text{vec} (\mathbf{R}_{\mathbf{x}}^0(\tilde{f}))$.
Combining (\ref{eq:cyc0}) and (\ref{eq:rzrx}), we can write $\mathbf{r}_\mathbf{z}^0(\tilde{f})$ as the sum of two components as follows,
\begin{equation} \label{eq:cyc_vec0}
\mathbf{r}_\mathbf{z}^0(\tilde{f})= \mathbf{\Phi}_2 \mathbf{r}_{\mathbf{x}_2}^0(\tilde{f}) + \mathbf{w}(\tilde{f}),
\end{equation}
where $\mathbf{w}(\tilde{f})=\mathbf{\Phi}_1 \mathbf{r}_{\mathbf{x}_1}^0(\tilde{f})$ is the noise component. We note that the signal's power spectrum is buried in the noise $\mathbf{w}(\tilde{f})$. Therefore, we do not recover it and signal detection will be performed only on cyclic frequencies $\alpha \neq 0$.

From (\ref{eq:low}) and (\ref{eq:mapping}), by recovering $\mathbf{r}_\mathbf{x}^a(\tilde{f}) \in \Sigma_{2K}$ for all $a \in [0, f_s], \tilde{f} \in [0, f_s-a]$, we recover the entire cyclic spectrum of $x(t)$. We consider only $a \geq 0$ and consequently $\alpha \geq 0$. We thus only reconstruct half of the cyclic spectrum, known to be symmetric \cite{GardnerBook}. In (\ref{eq:rzrx}), there is no noise component, even if the signal $x(t)$ is corrupted by additional stationary noise. For the corresponding cyclic frequencies, we can therefore achieve perfect recovery. In contrast, in (\ref{eq:cyc_vec0}), namely for $a=0$, there is an additional noise component. From (\ref{eq:mapping_param}), this case corresponds to cyclic frequencies which are multiples of the channels' sampling frequency $f_s$. For these frequencies, the recovery of the sparse vector $\mathbf{r}_{\mathbf{x}_2}^0(\tilde{f}) \in \Sigma_{2K}$ is not perfect and is performed in the presence of bounded noise. In the simulations, we observe that for detection purposes, this noisy recovery is satisfactory. To achieve perfect recovery for these cyclic frequencies as well, we may sample the signal using a different sampling frequency $f_{s_2}$.

\subsection{Cyclic Spectrum Recovery Conditions}

We now consider conditions for perfect recovery of the cyclic spectrum $S_x^{\alpha}(f)$ from sub-Nyquist samples. Corollary \ref{thm:general} below derives sufficient conditions on the minimal number of channels $M$ for perfect recovery of $\mathbf{R}_\mathbf{x}^a(\tilde{f})$, for any $a \in ]0,f_s]$ and $\tilde{f} \in [0, f_s-a]$ in the presence of additive stationary noise. As stated above, for $a=0$, the recovery is noisy. From \textbf{Conclusion 1}, we only need to recover $\mathbf{r}_{\mathbf{x}}^a(\tilde{f})$ from (\ref{eq:rzrx}) or (\ref{eq:cyc_vec0}) for $a \neq 0$ or $a=0$, respectively. Theorem \ref{prop:general} first states sufficient conditions for reconstruction of the vector $\mathbf{r}_\mathbf{x}^a(\tilde{f}) \in \Sigma_{2K}$. To that end, we rely on the following Lemma which is well known in the CS literature \cite{SamplingBook, CSBook}.


\begin{lemma}
\label{th:struct_sparse}
For any vector $\mathbf{y} \in \mathbb{R}^m$, there exists at most one signal $\mathbf{x} \in \Sigma_k$ such that $\bf y=Ax$ if and only if all sets of $2k$ columns of $\bf A$ belonging to $W$, such that
\begin{equation} \label{eq:W_def}
W = \{ S(\mathbf{x}_1) \bigcup S(\mathbf{x}_2) | \mathbf{x}_1, \mathbf{x}_2 \in \Sigma_k \},
\end{equation}
are linearly independent. In particular, for uniqueness we must have that $m \geq 2k$.
\end{lemma}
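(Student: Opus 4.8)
The plan is to prove both implications through the standard null-space (``spark'') characterization of sparse uniqueness, adapted to the structured model $\Sigma_k$, and to read off the bound $m \ge 2k$ at the end. The observation I would lean on throughout is that recovery of the $\Sigma_k$-solution is ambiguous precisely when two distinct vectors $\mathbf{x}_1, \mathbf{x}_2 \in \Sigma_k$ produce identical measurements, i.e. when the difference $\mathbf{h} = \mathbf{x}_1 - \mathbf{x}_2$ is a nonzero vector in the null space of $\mathbf{A}$. By the definition of $W$ in (\ref{eq:W_def}), the support of any such $\mathbf{h}$ satisfies $S(\mathbf{h}) \subseteq S(\mathbf{x}_1) \cup S(\mathbf{x}_2) \in W$ with $|S(\mathbf{h})| \le 2k$, so every ``ambiguity direction'' is supported on one of the index sets singled out in the statement.

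For the direction ``linear independence $\Rightarrow$ uniqueness'' I would argue by contraposition. Assume uniqueness fails, so there exist distinct $\mathbf{x}_1, \mathbf{x}_2 \in \Sigma_k$ with $\mathbf{A}\mathbf{x}_1 = \mathbf{A}\mathbf{x}_2$. Setting $\mathbf{h} = \mathbf{x}_1 - \mathbf{x}_2 \neq 0$ gives $\mathbf{A}\mathbf{h} = 0$, and by the observation above $S(\mathbf{h})$ is contained in some set $T \in W$ of size at most $2k$. The relation $\mathbf{A}\mathbf{h} = 0$ is a nontrivial linear dependence among the columns of $\mathbf{A}$ indexed by $S(\mathbf{h}) \subseteq T$, which contradicts the hypothesis that the columns indexed by the $2k$-element members of $W$ are linearly independent (dependence of a subset forces dependence of any containing set). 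Hence uniqueness must hold.

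For the converse ``uniqueness $\Rightarrow$ linear independence'', again by contraposition, I would start from a set $T \in W$ of size at most $2k$ whose columns $\{\mathbf{A}_j\}_{j \in T}$ are linearly dependent, which yields a nonzero $\mathbf{v}$ with $S(\mathbf{v}) \subseteq T$ and $\mathbf{A}\mathbf{v} = 0$. Writing $T = S(\mathbf{y}_1) \cup S(\mathbf{y}_2)$ with $\mathbf{y}_1, \mathbf{y}_2 \in \Sigma_k$, I split $\mathbf{v}$ into the restriction $\mathbf{x}_1 = \mathbf{v}|_{S(\mathbf{y}_1)}$ and $\mathbf{x}_2 = -\mathbf{v}|_{T \setminus S(\mathbf{y}_1)}$, so that $\mathbf{x}_1 - \mathbf{x}_2 = \mathbf{v}$ and therefore $\mathbf{A}\mathbf{x}_1 = \mathbf{A}\mathbf{x}_2$ while $\mathbf{x}_1 \neq \mathbf{x}_2$, contradicting uniqueness. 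The point that requires care here, and which I expect to be the main obstacle, is verifying that $\mathbf{x}_1$ and $\mathbf{x}_2$ genuinely belong to $\Sigma_k$: each is supported inside one of the constituent supports $S(\mathbf{y}_i)$ and is thus $k$-sparse, but one must confirm that the admissible-support family underlying $\Sigma_k$ is preserved under passing to these sub-supports, so that $\mathbf{x}_1, \mathbf{x}_2$ remain valid structured-sparse vectors. This is exactly the role played by the definition of $W$ as a union of $\Sigma_k$-supports, and it is the only step that is not purely formal.

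Finally, the rate bound follows at once: taking any $T \in W$ with $|T| = 2k$, which exists as soon as $\Sigma_k$ admits two vectors with disjoint $k$-element supports, the hypothesis forces the corresponding $2k$ columns of $\mathbf{A}$, viewed as vectors in $\mathbb{R}^m$, to be linearly independent. Since at most $m$ vectors in $\mathbb{R}^m$ can be linearly independent, this is possible only if $m \ge 2k$.
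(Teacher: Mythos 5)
The paper does not actually prove this lemma; it is invoked as a standard fact from the compressed sensing literature (with citations to \cite{SamplingBook, CSBook}), so there is no in-paper argument to compare against. Your proof is the standard null-space characterization and it is correct: non-uniqueness over $\Sigma_k$ is equivalent to the existence of a nonzero null-space vector supported on some $S(\mathbf{x}_1)\cup S(\mathbf{x}_2)\in W$, and both directions follow by contraposition exactly as you describe. The one step you rightly single out --- that the two pieces of the split null-space vector $\mathbf{v}=\mathbf{x}_1-\mathbf{x}_2$ again lie in $\Sigma_k$ --- is the only place where the structured model matters. It goes through because, in the union-of-subspaces reading intended here, $\Sigma_k$ consists of all vectors whose support is \emph{contained in} an admissible support set (each admissible support spans a subspace), so restricting $\mathbf{v}$ to $S(\mathbf{y}_1)$ and to $T\setminus S(\mathbf{y}_1)\subseteq S(\mathbf{y}_2)$ keeps both pieces inside admissible subspaces; if one instead insisted that the support equal an admissible set exactly, a small perturbation/downward-closure argument would be needed, but that is a matter of how $\Sigma_k$ is read, not a gap in your reasoning. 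Two cosmetic points: sets in $W$ may have fewer than $2k$ elements, so the hypothesis should be read as covering all members of $W$ (a subset of an independent set being independent makes this harmless); and the final bound $m\ge 2k$ requires, as you note, that $W$ actually contain a $2k$-element set, which holds whenever two disjoint admissible $k$-supports exist --- true in the paper's setting.
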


%

Using Lemma \ref{th:struct_sparse}, it follows that in order to perfectly recover $\mathbf{r}_x^a(\tilde{f}) \in \Sigma_{2K}$ from $\mathbf{r}_z^a(\tilde{f})$, we need to ensure that all sets of $4K$ columns of $\bf \Phi$ belonging to $W$ are linearly independent. In our case, $W$ is the set of unions of the supports of two vectors from $\Sigma_{2K}$, as defined in (\ref{eq:W_def}).
The following theorem relates spark properties of the sampling matrix $\bf A$ to rank properties of $M^2 \times 4K$ sub-matrices of $\bf \Phi$, whose columns belong to $W$, which we denote by $\mathbf{\Phi}^W$. Since $\mathbf{\Phi}_2$ in (\ref{eq:cyc_vec0}) is a submatrix of $\bf \Phi$, we only need to consider recovery conditions for the case $a \neq 0$. For $a=0$, under these conditions, we do not achieve perfect recovery due to the presence of noise in (\ref{eq:cyc_vec0}).

\begin{theorem} \label{prop:general}
Let $\bf \Phi$ be defined in (\ref{eq:Phi_def}), with $\bf A$ of size $M \times N$ ($M \le N$) such that $\text{spark}(\mathbf{A})=M+1$. The $M^2 \times 4K$ matrix $\mathbf{\Phi}^W$, whose columns belong to $W$ defined in (\ref{eq:W_def}) with $\Sigma_k=\Sigma_{2K}$ from (\ref{eq:Sig_def}), is full column rank if $M>\frac{8}{5}K$.
\end{theorem}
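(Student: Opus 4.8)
The plan is to recast the full-column-rank question as an injectivity statement about the bilinear map $\mathbf C\mapsto\mathbf A\mathbf C\mathbf A^H$ and then control it with the spark hypothesis and the structured support $\mathcal I$. First I would undo the vectorization in (\ref{eq:rzrx}): writing $\mathbf a_i$ for the $i$th column of $\mathbf A$, the column of $\bar{\mathbf A}\otimes\mathbf A$ indexed by the cell $(i,j)$ is $\bar{\mathbf a}_j\otimes\mathbf a_i=\mathrm{vec}(\mathbf a_i\bar{\mathbf a}_j^{\,T})$. Hence any combination $\sum_{(i,j)\in W}c_{ij}\,\bar{\mathbf a}_j\otimes\mathbf a_i$ equals $\mathrm{vec}(\mathbf A\mathbf C\mathbf A^H)$, where $\mathbf C$ is the $N\times N$ matrix supported on $W$ with entries $c_{ij}$. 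Therefore $\mathbf\Phi^W$ has full column rank if and only if the only $\mathbf C$ supported on $W$ with $\mathbf A\mathbf C\mathbf A^H=\mathbf 0$ is $\mathbf C=\mathbf 0$; this is precisely the linear independence of the $4K$ columns demanded by Lemma \ref{th:struct_sparse} applied with $\Sigma_k=\Sigma_{2K}$.

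Next I would exploit $\mathrm{spark}(\mathbf A)=M+1$ through a grouping argument. Let $R$ and $L$ be the row and column index sets that $W$ can activate, and for $j\in L$ set $R_j=\{i:(i,j)\in W\}$. Grouping the dependency by column index yields $\sum_{j\in L}\bar{\mathbf a}_j\otimes\mathbf v_j=\mathbf 0$ with $\mathbf v_j=\sum_{i\in R_j}c_{ij}\mathbf a_i\in V_j:=\mathrm{span}\{\mathbf a_i:i\in R_j\}$, which is the matrix identity $\mathbf V\mathbf A_L^H=\mathbf 0$ for $\mathbf V=[\,\mathbf v_j\,]_{j\in L}$. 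Since every set of at most $M$ columns of $\mathbf A$ is linearly independent, $|L|\le M$ forces $\mathbf A_L$ to have full column rank, hence $\mathbf V=\mathbf 0$, hence each $\mathbf v_j=\mathbf 0$, and finally $c_{ij}=0$ because $|R_j|\le 4\le M$ (Conclusions 2--3) makes $\{\mathbf a_i\}_{i\in R_j}$ independent. By transposition the same holds if $|R|\le M$, so it already suffices that $\min(|R|,|L|)\le M$; pushing from this naive bound down to the stated constant will additionally use the low-dimensional subspace constraint $\mathbf v_j\in V_j$, which can annihilate solutions even when $|L|>M$.

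The crux, and the step I expect to be the main obstacle, is the combinatorial/linear-algebraic count that converts the structure of $W$ into the threshold $M>\tfrac{8}{5}K$. Here I would invoke all four Conclusions together with (\ref{eq:Sig_def}): Conclusion 1 confines $W$ to the three main- and three anti-diagonals; property 1 of $\mathcal I$ pairs every active row $i$ with its partner $N+1-i$; and properties 2--4 permit at most one main-diagonal and one anti-diagonal entry per active row of each of the two supports. Because the anti-diagonal entries of row $i$ and the main-diagonal entries of row $N+1-i$ land in the same three-column cluster, the columns a symmetric pair of active rows can generate are forced to cluster around the same index sets as the rows, producing overlap between $R$ and $L$ and between the two supports and pulling the distinct-index count below the naive $2K$. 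The delicate work is to enumerate the worst-case arrangement of the two $2K$-sparse structured supports, handling the boundary groups $j=1$ and $j=N$, the coincidences forced by the $\pm1$ band-splitting shifts, and the cases where the constraint $\mathbf v_j\in V_j$ removes otherwise-admissible null vectors, and then to verify that this worst case never exceeds $\tfrac{8}{5}K$ effective indices. Showing that the governing quantity is $\tfrac{8}{5}K$ rather than $2K$ is exactly what produces the condition $M>\tfrac{8}{5}K$, and is where the real effort lies.
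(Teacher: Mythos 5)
Your setup is sound and matches the paper's opening move: undoing the vectorization to view a putative linear dependence among the columns of $\mathbf{\Phi}^W$ as a matrix identity of the form $\mathbf{A}\mathbf{C}\mathbf{A}^H=\mathbf{0}$ (the paper writes this as $\mathbf{C}\bar{\mathbf{A}}^T=\mathbf{0}$ after absorbing one factor of $\mathbf{A}$ into the coefficient matrix), and your observation that the case $\min(|R|,|L|)\le M$ follows immediately from $\text{spark}(\mathbf{A})=M+1$ is correct. But the argument stops exactly where the theorem begins: you explicitly defer the step that produces the constant $\tfrac{8}{5}$, calling it ``the main obstacle'' and ``where the real effort lies,'' so the statement is not actually proved. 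Moreover, the mechanism you conjecture for that step --- a combinatorial enumeration showing that the two structured supports can never activate more than $\tfrac{8}{5}K$ ``effective'' distinct indices --- is not how the bound arises, and I do not see how it could be made to work: the union of two supports from $\Sigma_{2K}$ genuinely can involve on the order of $2K$ distinct left indices, and no support-overlap count reduces this to $\tfrac{8}{5}K$ in the worst case.

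The paper's actual argument is rank-theoretic rather than combinatorial. After regrouping by the left Kronecker index, each of the $n_0\le 2K-k_0$ nonzero columns of the resulting coefficient matrix $\tilde{\mathbf{C}}$ is a linear combination of at most \emph{four} columns of $\mathbf{A}$ (the $g(\cdot)$ and $h(\cdot)$ partners from the two supports), so the spark hypothesis gives $\text{rank}(\tilde{\mathbf{C}})\ge\min\left(\lceil n_0/4\rceil,\,M/4\right)$ in addition to $\text{rank}(\tilde{\mathbf{A}}^T)\ge\min(n_0,M)$. The Sylvester rank inequality then yields $\text{rank}(\mathbf{C}\bar{\mathbf{A}}^T)\ge\min\left(\lceil n_0/4\rceil,\,\tfrac{5}{4}M-n_0\right)$, which is strictly positive for all $1\le n_0\le 2K$ precisely when $M>\tfrac{8}{5}K$, contradicting $\mathbf{C}\bar{\mathbf{A}}^T=\mathbf{0}$. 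In other words, the factor $\tfrac{5}{4}=1+\tfrac{1}{4}$ comes from \emph{adding} the residual rank of the inner factor (whose columns each touch at most four columns of $\mathbf{A}$) to the rank of the outer factor $\bar{\mathbf{A}}^T$, not from shrinking the support. Your proposal contains neither this inequality nor any substitute for it, so the central implication $M>\tfrac{8}{5}K\Rightarrow\mathbf{\Phi}^W$ full column rank remains unestablished.
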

We note that if $K \geq 2$, namely there is at least one transmission, then $M > \frac{8}{5}K$ implies $M^2 \geq 4K$ and $\mathbf{\Phi}^W$ is a tall matrix.

\begin{proof}
Recall that the matrix $\bf \Phi$ is expressed as
\begin{multline}
\mathbf{\Phi} = [\begin{matrix}
\mathbf{\bar{a}}_{1} \otimes \mathbf{a}_{1} &\mathbf{\bar{a}}_{2} \otimes \mathbf{a}_{2} & \dots & \mathbf{\bar{a}}_{N} \otimes \mathbf{a}_{N} & \dots \end{matrix}  \\
 \begin{matrix} \dots & \mathbf{\bar{a}}_{1} \otimes \mathbf{a}_{2} &  \mathbf{\bar{a}}_{2} \otimes \mathbf{a}_{3} &  \dots &  \mathbf{\bar{a}}_{N-1} \otimes \mathbf{a}_{N} & \end{matrix}\\
 \begin{matrix} \dots & \mathbf{\bar{a}}_{2} \otimes \mathbf{a}_{1} &  \mathbf{\bar{a}}_{3} \otimes \mathbf{a}_{2} &  \dots &  \mathbf{\bar{a}}_{N} \otimes \mathbf{a}_{N-1} & \end{matrix}\\
 \begin{matrix}  \mathbf{\bar{a}}_{1} \otimes \mathbf{a}_{N} &\mathbf{\bar{a}}_{2} \otimes \mathbf{a}_{N-1} & \dots & \mathbf{\bar{a}}_{N} \otimes \mathbf{a}_{1} & \dots \end{matrix}  \\
 \begin{matrix} \dots & \mathbf{\bar{a}}_{1} \otimes \mathbf{a}_{N-1} &  \mathbf{\bar{a}}_{2} \otimes \mathbf{a}_{N-2} &  \dots &  \mathbf{\bar{a}}_{N-1} \otimes \mathbf{a}_{1} & \end{matrix}\\
 \begin{matrix} \dots & \mathbf{\bar{a}}_{2} \otimes \mathbf{a}_{N} &  \mathbf{\bar{a}}_{3} \otimes \mathbf{a}_{N-1} &  \dots &  \mathbf{\bar{a}}_{N} \otimes \mathbf{a}_{2} & \end{matrix}],
\end{multline}
where $\mathbf{a}_i$ and $\mathbf{\bar{a}}_i$ denote the $i$th column of $\bf A$ and its conjugate, respectively. Assume by contradiction that the columns of $\mathbf{\Phi}^W$ are linearly dependent. Then, there exist $\beta_1, \cdots \beta_{4K}$, not all zeros, such that
\begin{equation}
\label{eq:proof_syst}
\sum_{j=1}^{4K} \beta_j \boldsymbol{\phi}^W_{j}=\mathbf{0},
\end{equation}
where $\boldsymbol{\phi}^W_j$ denotes the $j$th column of $\mathbf{\Phi}^W$ and is of the form
\begin{equation}
\boldsymbol{\phi}^W_{j} = \mathbf{\bar{a}}_{[j]} \otimes \mathbf{a}_{g([j])} \text{ or } \mathbf{\bar{a}}_{[j]} \otimes \mathbf{a}_{g([N+1-j])}.
\end{equation}
Here $\mathbf{a}_{[j]}$ denotes the column of $\bf A$ that corresponds to the $j$th selected column of $\bf \Phi$ and $g(\cdot)$ is defined in (\ref{eq:g_def}). 
Denote by $k_0$ the number of indices $[j]$, for $1 \leq j \leq K$, that appear twice in $\mathbf{\Phi}^W$. Obviously, $0 \leq k_0 \leq K$ and $k_0$ is even.

From 1)-4) in the definition of $\Sigma_{2K}$, we can express $\mathbf{\Phi}^W$ as
\begin{multline}
\mathbf{\Phi}^W = [\begin{matrix}
\mathbf{\bar{a}}_{[1]} \otimes \mathbf{a}_{g([1])} & \dots & \mathbf{\bar{a}}_{[K]} \otimes \mathbf{a}_{g([K])} & \dots \end{matrix} \\
\begin{matrix}\dots & \mathbf{\bar{a}}_{[1]} \otimes \mathbf{a}_{h([1])} & \dots & \mathbf{\bar{a}}_{[k_0/2]} \otimes \mathbf{a}_{h([k_0/2])}  & \dots \end{matrix} \\
\begin{matrix}\dots &  \mathbf{\bar{a}}_{[K-k_0/2]} \otimes \mathbf{a}_{h([K-k_0/2])} & \dots & \mathbf{\bar{a}}_{[K]} \otimes \mathbf{a}_{h([K])} & \dots \end{matrix} \\
\begin{matrix}\dots & \mathbf{\bar{a}}_{[K+1]} \otimes \mathbf{a}_{g([K+1])} & \dots & \mathbf{\bar{a}}_{[2K-k_0]} \otimes \mathbf{a}_{g([2K-k_0])}  &\dots \end{matrix} \\
 \begin{matrix} \dots & \mathbf{\bar{a}}_{[1]} \otimes \mathbf{a}_{g([K])} & \dots & \mathbf{\bar{a}}_{[K]} \otimes \mathbf{a}_{g([1])} & \dots \end{matrix} \\
\begin{matrix} \dots & \mathbf{\bar{a}}_{[1]} \otimes \mathbf{a}_{h([K])} & \dots & \mathbf{\bar{a}}_{[k_0/2]} \otimes \mathbf{a}_{h([K-k_0/2])} & \dots \end{matrix} \\
\begin{matrix} \dots & \mathbf{\bar{a}}_{[k_0/2]} \otimes \mathbf{a}_{h([K-k_0/2])} & \dots & \mathbf{\bar{a}}_{[K]} \otimes \mathbf{a}_{h([1])} & \dots \end{matrix} \\
 \begin{matrix} \dots & \mathbf{\bar{a}}_{[K+1]} \otimes \mathbf{a}_{[2K-k_0]} &  \dots & \mathbf{\bar{a}}_{g([2K-k_0])} \otimes \mathbf{a}_{g([K+1])} \end{matrix}].
\end{multline}
Here, $h(i)$ is defined as $g(i)$ and we can have either $h(i)=g(i)$ or $h(i) \neq g(i)$.
By rearranging the columns of $\mathbf{\Phi}^W$ with respect to the left entry index of each Kronecker product, the system of equations (\ref{eq:proof_syst}) can be written as
\begin{equation}
\label{eq:proof_eq}
\mathbf{C} \bar{\mathbf{A}}^T= \mathbf{0},
\end{equation}
where the $M \times (2K-k_0)$ matrix $\bf C$ is defined by
\begin{multline}
\mathbf{C} = 
[\begin{matrix}  \beta_1 \mathbf{a}_{g([1])} +\beta_{K+1} \mathbf{a}_{h([1])} + \beta_{2K+1} \mathbf{a}_{g([K])} +\beta_{3K+1} \mathbf{a}_{h([K])}  \end{matrix} \\
\begin{matrix}   \cdots &  \beta_K \mathbf{a}_{g([K])} +\beta_{K+k_0} \mathbf{a}_{h([K])} + \beta_{3K} \mathbf{a}_{g([1])} +\beta_{3K+k_0} \mathbf{a}_{h([1])}  \end{matrix} \\
 \begin{matrix} \cdots & \beta_{K+k_0+1} \mathbf{a}_{g([K+1])} + \beta_{3K+k_0+1} \mathbf{a}_{g([2K-k_0])}  \end{matrix} \\
 \begin{matrix} \cdots & \beta_{2K} \mathbf{a}_{g([2K-k_0])} + \beta_{4K} \mathbf{a}_{g([K+1])}  \end{matrix}].
\end{multline}

We next show that $\text{rank}(\mathbf{C} \bar{\mathbf{A}}^T)>0$ in order to contradict the assumption that there exist $\beta_1, \cdots \beta_{4K}$, not all zeros, such that (\ref{eq:proof_syst}) holds.
Denote by $n_0$ the number of quadruplets $(\beta_j, \beta_{K+j}, \beta_{2K+j}, \beta_{3K+j})$ , for $1 \leq j \leq K$, or pairs $(\beta_{k_0+j}, \beta_{2K+k_0+j})$, for $K+1 \leq j \leq 2K-k_0$, with one non zero element at least. Obviously, $1 \leq n_0 \leq 2K-k_0$. Let $\bf \tilde{C}$ be the $M \times n_0$ matrix composed of the $n_0$ columns of $\bf C$ corresponding to these non zero quadruplets/pairs and let $\mathbf{\tilde{A}}^T$ be constructed out of the corresponding $n_0$ rows of $\bar{\mathbf{A}}^T$. Then, from the Sylvester rank inequality,
\begin{equation}
\text{rank}(\mathbf{C} \bar{\mathbf{A}}^T)=\text{rank}(\mathbf{\tilde{C}} \tilde{\mathbf{A}}^T)\geq \text{rank}(\mathbf{\tilde{C}}) + \text{rank}(\mathbf{\tilde{A}}^T) - n_0.
\end{equation}
In addition, 
\begin{equation}
\text{rank}(\mathbf{\tilde{A}}^T) \geq \text{min}\left(n_0, \text{spark}(\mathbf{A})-1 \right),
\end{equation}
and 
\begin{equation}
\text{rank}(\mathbf{\tilde{C}}) \geq \text{min}\left(\left\lceil \frac{n_0}{4} \right\rceil , \frac{\text{spark}(\mathbf{A})-1}{4} \right).
\end{equation}
The last inequality follows from the fact that any linear combination of $n$ columns of $\bf C$ is a linear combination of at least $\left\lceil n/4 \right\rceil$ distinct columns of $\bf A$. Therefore, 
\begin{equation}
\text{rank}(\mathbf{C} \bar{\mathbf{A}}^T) \geq \text{min}\left(\left\lceil \frac{n_0}{4} \right\rceil, \frac{5}{4}M - n_0 \right).
\end{equation}
Since $M>\frac{8}{5}K$, it holds that $\text{rank}(\mathbf{C} \bar{\mathbf{A}}^T) \geq 1$, for all $1 \leq n_0 \leq 2K-k_0$, $0 \leq k_0 \leq K$, contradicting (\ref{eq:proof_eq}).
\end{proof}

Corollary \ref{thm:general}, which directly follows from Theorem \ref{th:struct_sparse}, provides sufficient conditions for perfect recovery of $\mathbf{r}_x^a(\tilde{f})$.
\begin{corollary}  \label{thm:general}
\label{th:first}
If
\begin{enumerate}
\item $\text{spark}(\mathbf{A})=M+1$,
\item $M > \frac{8}{5}K$,
\end{enumerate}
then the system (\ref{eq:rzrx}) has a unique solution in $\Sigma_{2K}$.
\end{corollary}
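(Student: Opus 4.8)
The plan is to obtain the corollary as an immediate consequence of the abstract uniqueness criterion of Lemma~\ref{th:struct_sparse} together with the rank bound established in Theorem~\ref{prop:general}; no new estimates are needed, only a careful matching of indices. First I would apply Lemma~\ref{th:struct_sparse} to the linear system (\ref{eq:rzrx}), taking the sampling matrix there to be $\mathbf{\Phi}$ and the model subspace to be $\Sigma_{2K}$, so that the lemma's sparsity level is $k=2K$. The lemma then states that (\ref{eq:rzrx}) has at most one solution in $\Sigma_{2K}$ if and only if every collection of $2k=4K$ columns of $\mathbf{\Phi}$ whose index set lies in $W$---the family of unions of two supports drawn from $\Sigma_{2K}$, as defined in (\ref{eq:W_def}) with $\Sigma_k=\Sigma_{2K}$---is linearly independent.

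Next I would translate this linear-independence requirement into the rank language of Theorem~\ref{prop:general}. Each such collection of $4K$ columns of $\mathbf{\Phi}$, read off according to $W$, is exactly an $M^2 \times 4K$ submatrix $\mathbf{\Phi}^W$ of the form considered in the theorem (recall from (\ref{eq:Phi_def}) that $\mathbf{\Phi}=(\bar{\mathbf{A}} \otimes \mathbf{A})\mathbf{B}$ has $M^2$ rows). Linear independence of those $4K$ columns is therefore equivalent to $\mathbf{\Phi}^W$ having full column rank $4K$, and the existence condition $M^2 \ge 4K$, noted right after the theorem for $K \ge 2$, guarantees that such a tall matrix can indeed attain full column rank.

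Finally I would invoke Theorem~\ref{prop:general} itself. Its two hypotheses, $\text{spark}(\mathbf{A})=M+1$ and $M>\tfrac{8}{5}K$, coincide verbatim with conditions 1) and 2) of the corollary, and its conclusion is precisely that every admissible $\mathbf{\Phi}^W$ is full column rank. Hence every set of $4K$ columns of $\mathbf{\Phi}$ indexed by $W$ is linearly independent, so by Lemma~\ref{th:struct_sparse} the data $\mathbf{r}_z^a(\tilde{f})$ is consistent with at most one vector in $\Sigma_{2K}$; since the true $\mathbf{r}_x^a(\tilde{f})$ is itself such a vector, (\ref{eq:rzrx}) has a unique solution in $\Sigma_{2K}$.

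I do not expect a genuine obstacle here, since all the analytic work---the Sylvester rank inequality, the spark bound on $\text{rank}(\tilde{\mathbf{A}}^T)$, and the $\lceil n/4 \rceil$ counting argument---has already been carried out in the proof of Theorem~\ref{prop:general}. The only point requiring care is the bookkeeping that identifies the lemma's level $k=2K$ with the $4K$-column submatrices of the theorem, and the verification that ``full column rank of $\mathbf{\Phi}^W$'' and ``linear independence of the corresponding $4K$ columns'' are the same statement; everything else is a direct citation.
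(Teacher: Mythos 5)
Your proposal is correct and follows essentially the same route as the paper, which likewise derives the corollary by combining Lemma~\ref{th:struct_sparse} (with $k=2K$) and the full-column-rank conclusion of Theorem~\ref{prop:general}. In fact your bookkeeping is slightly more careful than the paper's own one-line proof, which writes ``all sets of $2K$ columns'' where, as you correctly note, the relevant count is $2k=4K$.
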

\begin{proof}
Conditions 1)-2) ensure that all sets of $2K$ columns belonging to $W$ are linearly independent, from Theorem \ref{prop:general}. Thus, the proof follows from Lemma \ref{th:struct_sparse}.
\end{proof}
Under the conditions of Corollary \ref{th:first}, the cyclic spectrum $S_x(f)$ can be perfectly recovered for cyclic frequencies $\alpha$ which are not multiples of $f_s$. For $\alpha$ that are multiples of $f_s$, the recovery is performed in the presence of bounded noise, yielding a bounded reconstruction error. For detection purposes, this has proven satisfactory in the simulations.

Without any sparsity assumption, we can repeat the proof of Theorem \ref{prop:general} with $K=N$ and $k_0=N$, leading to $1 \leq n_0 \leq N$. We thus obtain that, if $M>\frac{4}{5}N$, then we can perfectly recover the cyclic spectrum of $x(t)$. The minimal sampling rate is then
\begin{equation}
f_{\text{min}_0}=Mf_s =\frac{4}{5}Nf_s=\frac{4}{5}f_{\text{Nyq}}.
\end{equation}
This means that even without any sparsity constraints on the signal, we can retrieve its cyclic spectrum from samples below the Nyquist rate, by exploiting its cyclostationary properties. 
A similar result was already observed in \cite{cohen2013cognitive} in the context of power spectrum reconstruction of wide-sense stationary signals in noiseless settings. In that case, the power spectrum slices appear only on the diagonal of the matrix $\mathbf{R}^0_x(\tilde{f})$ and it follows that $\bf \Phi= \bar{A} \odot A$. Power spectrum recovery is therefore a special case of cyclic spectrum reconstruction, treated here. There, it was shown that the power spectrum can be retrieved at half the Nyquist rate without any sparsity constraints. Here, we extend this result to cyclic spectrum reconstruction, which requires a higher rate.

If $x(t)$ is assumed to be sparse in the frequency domain, with $K=2N_{\text{sig}} \ll N$, then the minimal sampling rate for perfect reconstruction of its cyclic spectrum is
\begin{equation}
f_{\text{min}}=Mf_s =\frac{16}{5}N_{\text{sig}}B=\frac{8}{5}f_{\text{Landau}}.
\end{equation}
It was shown in \cite{cohen2013cognitive}, that the power spectrum of a stationary sparse signal can be perfectly recovered at its Landau rate. Again, the minimal sampling rate for cyclic spectrum recovery is slightly higher than that required for power spectrum reconstruction.

\subsection{Cyclic Spectrum Recovery}

So far, we only discussed conditions for perfect recovery of the cyclic spectrum, namely for (\ref{eq:rzrx}) and (\ref{eq:cyc_vec0}) to have unique solutions. We now provide an algorithm for cyclic spectrum reconstruction. To account for the structure of $\mathbf{r}_{\mathbf{x}}^a(\tilde{f})$ for $a \neq 0$, we extend orthogonal matching pursuit (OMP) \cite{CSBook, SamplingBook}. In each iteration, we add an internal loop that, for a selected element originally from the diagonals of $\mathbf{R}_{\mathbf{x}}^a(\tilde{f}$), checks for a corresponding non-zero element from the anti-diagonals, and vice versa, as defined by the set $\mathcal{I}$. For $a=0$, we use the standard OMP \cite{CSBook, SamplingBook}. We note that for all $a \geq 0$ we can exploit the additional symmetric structure of $\mathbf{r}_{\mathbf{x}}^a(\tilde{f})$ as defined by Property 1 of $\mathcal{I}$. Our structured OMP method (assuming that the columns of $\mathbf{\Phi}$ are normalized) is formally defined by Algorithm~\ref{alg:OMP}.

In Algorithm~\ref{alg:OMP}, $\mathbf{v} = \mathbf{r}_{\mathbf{z}}^a(\tilde{f})$, $\Lambda_i$ denotes the set of complementary indices with respect to $i$ according to Property 2 of $\mathcal{I}$, namely for $2 \leq i \leq N-1$,
\begin{equation}
\Lambda_i = \left\{ \begin{array}{ll} \{6d_i +5, 6d_i +6, 6d_i +7 \}& \text{if } 0<\text{mod} (i-4, 6) \leq 3 \\
 \{6d_i +8, 6d_i +9, 6d_i +10 \} & \text{else}, \end{array} \right.
\end{equation}
where $d_i = \lfloor \frac{i-4}{6} \rfloor$. For $i=1$ and $i=N$, $\Lambda_i$ is similarly defined according to Properties 3 and 4, respectively.
The vector $\mathbf{w}^S$ is the reduction of $\mathbf{w}$ to the support $S$, $\mathbf{\Phi}_S$ contains the corresponding columns of $\bf \Phi$, $S^c$ denotes the complementary set of $S$ and $\dagger$ is the Moore-Penrose pseudo-inverse. The halting criterion in Algorithm~\ref{alg:OMP}, as for standard OMP, can be sparsity-based if the true sparsity is known, or at least an upper bound for it, or residual-based.

\begin{algorithm}[!t]
\caption{Structured OMP \label{alg:OMP}}
\begin{algorithmic}
\STATE \textbf{Input:} observation vector $\mathbf{v}$ of size $m$, measurement matrix $\mathbf{\Phi}$ of size $m \times n$, threshold $\epsilon >0$
\STATE \textbf{Output:} index set $S$ containing the locations of the non zero indices of $\mathbf{u}$, estimate for sparse vector $\bf \hat{u}$
\STATE \textbf{Initialize:} residual $\mathbf{r}_0=\mathbf{v}$, index set $S_0=\emptyset$, possible index set $\Gamma_0 = \{1, \dots, n\}$, estimate $\mathbf{\hat{u}}=\mathbf{0}$, $\ell=0$
\WHILE{halting criterion false}
\STATE $\ell \leftarrow \ell+1$
\STATE $\mathbf{b} \leftarrow \mathbf{\Phi}^* \mathbf{r}_{\ell}$
\STATE $S_{\ell} \leftarrow S_{\ell} \cup \arg \max\limits_{i \in \Gamma_{\ell}} \mathbf{b}_i$
\STATE $\Gamma_{\ell} \leftarrow \Gamma_{\ell} \setminus \Lambda_i$
\STATE $(\mathbf{\hat{u}}_\ell)^{S_{\ell}} \leftarrow \mathbf{\Phi}_{S_{\ell}}^{\dagger} \mathbf{v}$, $(\mathbf{\hat{u}}_\ell)^{S_{\ell}^c} \leftarrow \mathbf{0}$
\STATE $\delta_0 \leftarrow ||\mathbf{v} - \mathbf{\Phi} \mathbf{\hat{u}}_{\ell}||^2$
\FOR{$j \in \Lambda_i$}
\STATE $\mathbf{w}_j^{S_{\ell}} \leftarrow \mathbf{\Phi}_{S_{\ell} \cup j}^{\dagger} \mathbf{v}$, $\mathbf{w}_j^{S_{\ell}^c} \leftarrow \mathbf{0}$
\STATE $\delta_j \leftarrow ||\mathbf{v} - \mathbf{\Phi} \mathbf{w}_{j}||^2$
\ENDFOR
\IF{$ \delta_0 - \min\limits_{j \in \Lambda_i} \delta_j > \epsilon$}
\STATE $S_{\ell} \leftarrow S_{\ell} \cup \arg \min\limits_{j \in \Lambda_i} \delta_j$
\STATE $(\mathbf{\hat{u}}_\ell)^{S_{\ell}} \leftarrow \mathbf{\Phi}_{S_{\ell}}^{\dagger} \mathbf{v}$, $(\mathbf{\hat{u}}_\ell)^{S_{\ell}^c} \leftarrow \mathbf{0}$
\ENDIF
\STATE $\mathbf{r} \leftarrow \mathbf{v} - \mathbf{\Phi} \mathbf{\hat{u}}_\ell$
\ENDWHILE
\STATE return $S_{\ell}$ and $\mathbf{\hat{u}}_{\ell}$
\end{algorithmic}
\end{algorithm}

Similarly to \cite{Mishali_multicoset}, the set (\ref{eq:rzrx}) consists of an infinite number of linear systems since $\tilde{f}$ is a continuous variable. Since the support $S$ is common to $\mathbf{r}_{\mathbf{x}}^a(\tilde{f})$ for all $\tilde{f} \in \mathcal{F}_s$, we propose to recover it jointly instead of solving (\ref{eq:rzrx}) for each $\tilde{f}$ individually, thus increasing efficiency and robustness to noise. To that end, we use the support recovery paradigm from \cite{Mishali_multicoset} that produces a finite system of equations, called multiple measurement vectors (MMV) from an infinite number of linear systems.
This reduction is performed by what is referred to as the continuous to finite (CTF) block. The cyclic spectrum reconstruction of both sparse and non sparse signals can then be divided into two stages: support recovery, performed by the CTF, and cyclic spectrum recovery. 
From (\ref{eq:rzrx}), for $a \in ]0, f_s]$, we have
\begin{equation}
\mathbf{Q}^a = \mathbf{\Phi Z}^a \mathbf{\Phi}^H
\end{equation}
where
\begin{equation}
\mathbf{Q}^a= \int_{\tilde{f} \in \mathcal{F}_s} \mathbf{r}_{\mathbf{z}}^a(\tilde{f}) {\mathbf{r}_{\mathbf{z}}^a}^H(\tilde{f}) \mathrm{d}\tilde{f}
\end{equation}
is an $M \times M$ matrix and
\begin{equation}
\mathbf{Z}^a= \int_{\tilde{f} \in \mathcal{F}_s} \mathbf{r}_{\mathbf{x}}^a(\tilde{f}) {\mathbf{r}_{\mathbf{x}}^a}^H(\tilde{f}) \mathrm{d}\tilde{f}
\end{equation}
is an $N \times N$ matrix. Then, any matrix $\mathbf{V}^a$ for which $\mathbf{Q}^a=\mathbf{V}^a \left( \mathbf{V}^a\right)^H$ is a frame for $\mathbf{r}_{\mathbf{z}}^a(\mathcal{F}_s)=\{\mathbf{r}_{\mathbf{z}}^a(\tilde{f})|\tilde{f} \in \mathcal{F}_s \}$ \cite{Mishali_multicoset, mishali2008reduce}. Clearly, there are many possible ways to select $\mathbf{V}^a$. We construct it by performing an eigendecomposition of $\mathbf{Q}^a$ and choosing $\mathbf{V}^a$ as the matrix of eigenvectors corresponding to the non zero eigenvalues. We then define the following linear system
\begin{equation} \label{eq:CTF}
\mathbf{V}^a= \mathbf{\Phi} \mathbf{U}^a.
\end{equation}
For $a=0$, identical derivations can be carried out by replacing $\bf \Phi$ by $\mathbf{\Phi}_2$.
From \cite{Mishali_multicoset} (Propositions 2-3), the support of the unique sparsest solution of (\ref{eq:CTF}) is the same as the support of $\mathbf{r}_{\mathbf{x}}^a(\tilde{f})$ in our original set of equations (\ref{eq:rzrx}) (or (\ref{eq:cyc_vec0})). For simplicity, Algorithm~\ref{alg:OMP} presents the single measurement vector (SMV) version of the recovery algorithm, which can be adapted to the MMV settings, similarly to the simultaneous OMP \cite{CSBook, SamplingBook}, to solve (\ref{eq:CTF}).

As discussed above, $\mathbf{r}_{\mathbf{x}}^a(\tilde{f})$ is $2K$-sparse for each specific $\tilde{f} \in [0, f_s-a]$, for all $a \in [0, f_s]$. However, after combining the frequencies, the matrix $\mathbf{U}^a$ is $4K$-sparse (at most), since the spectrum of each transmission may split between two slices. Therefore, the above algorithm, referred to as SBR4 in \cite{Mishali_multicoset} (for signal reconstruction as opposed to cyclic spectrum reconstruction), requires a minimal sampling rate of $2f_{\text{min}}$ for sparse signals or $2f_{\text{min}_0}$ for non sparse signals. In order to achieve the minimal rate $f_{\text{min}}$ or $f_{\text{min}_0}$, the SBR2 algorithm regains the factor of two in the sampling rate at the expense of increased complexity \cite{Mishali_multicoset}. In a nutshell, SBR2 is a recursive algorithm that alternates between the CTF described above and a bi-section process. The bi-section splits the original frequency interval into two equal width intervals on which the CTF is applied, until the level of sparsity of $\mathbf{U}^a$ is less or equal to $2K$. We refer the reader to \cite{Mishali_multicoset} for more details.

Once the support $S$ is known, perfect reconstruction of the cyclic spectrum is obtained by
\begin{eqnarray} \label{eq:recs}
(\mathbf{\hat{r}}_{\mathbf{x}}^a)^S(\tilde{f}) &=& \mathbf{\Phi}_S^{\dagger} \mathbf{r}_{\mathbf{z}}^a(\tilde{f})\\
\mathbf{\hat{r}}_{\mathbf{x}_i}^a(\tilde{f}) &=& 0 \quad \forall i \notin S, \nonumber
\end{eqnarray}
for all $a \in ]0, f_s]$ and
\begin{eqnarray} \label{eq:recs0}
(\mathbf{\hat{r}}_{\mathbf{x}}^0)^S(\tilde{f}) &=& (\mathbf{\Phi}_2)_S^{\dagger} \mathbf{r}_{\mathbf{z}}^0(\tilde{f})\\
\mathbf{\hat{r}}_{\mathbf{x}_i}^0(\tilde{f}) &=& 0 \quad \forall i \notin S. \nonumber
\end{eqnarray} 
The cyclic spectrum $S_x^{\alpha}(f)$ is then assembled using (\ref{eq:mapping}) for $(f, \alpha)$ defined in (\ref{eq:mapping_param}).

\subsection{Carrier Frequency and Bandwidth Estimation}

Once the cyclic spectrum $S_x^{\alpha}(f)$ is reconstructed from the sub-Nyquist samples, we apply our carrier frequency and bandwidth estimation algorithm from \cite{liad_cyclo}. Our approach is a simple parameter extraction method from the cyclic spectrum of multiband signals. It allows the estimation of several carriers and several bandwidths simultaneously, as well as that of the number of transmissions $N_{\text{sig}}$. The proposed algorithm consists of the following five steps: preprocessing, thresholding, clustering, parameter estimation, corrections.
Here, we briefly describe the algorithm steps. The reader is referred to \cite{liad_cyclo} for more details.

The preprocessing aims to compensate for the presence of stationary noise in the cyclic spectrum at the cyclic frequency $\alpha=0$, by attenuating the cyclic spectrum energy around this frequency. Thresholding is then applied to the resulting cyclic spectrum in order to find its peaks. For each cyclic frequency $\alpha$, we retain the value of the cyclic spectrum at $f=0$. The locations and values of the selected peaks are then clustered to find the corresponding cyclic feature. Before separating the clusters, we start by estimating their number using the elbow method, which can be traced to speculation by Thorndike \cite{elbow}. Clustering is then performed using the k-means method. At the end of the process, each cluster represents a cyclic feature. It follows that, apart from the cluster present at DC which we remove, the number of signals $N_\text{sig}$ is equal to half the number of clusters. Next, we estimate the carrier frequency $f_i$ and bandwidth $B_i$ of each transmission. The carrier frequency yields the highest correlation \cite{GardnerBook} and thus the highest peak, at the cyclic frequency equal to twice its value, namely $\alpha=2f_i$. It is therefore estimated as half the cyclic frequency of the highest peak within the clusters belonging to the same signal. The bandwidth is found by locating the edge of the support of the angular frequencies.

The processing flow of our low rate sampling and cyclic spectrum recovery algorithm is summarized in Fig.~\ref{fig:diagram}. 
\begin{figure}
  \begin{center}
    \includegraphics[width=1\columnwidth]{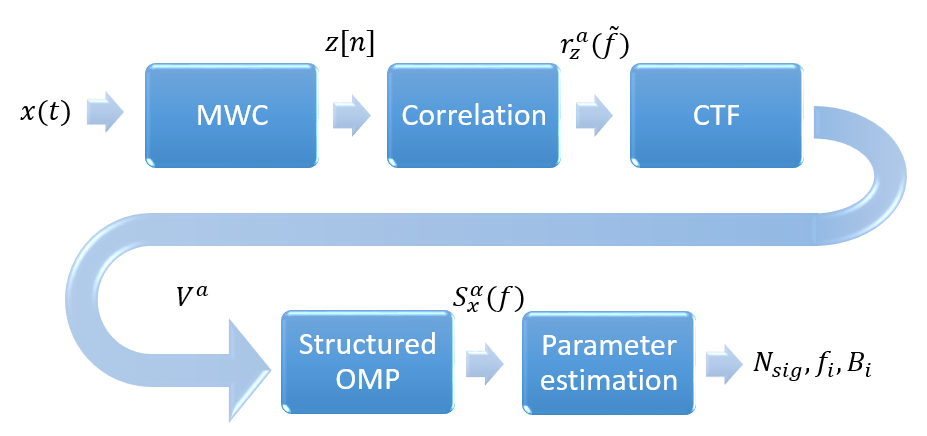}
    \caption{Processing flow diagram. The input signal $x(t)$ is first fed to the MWC analog front-end which generates the low rate samples $z[n]$ shown in (\ref{eq:mwc}) in the frequency domain. The correlations $\mathbf{r_z}^a(\tilde{f})$ between frequency shifted versions of the samples are then computed by vectorizing (\ref{eq:rza}). The CTF next produces a finite set of equations, which are solved using Algorithm~\ref{alg:OMP}, exploiting the known structure of the sparse vectors that compose the cyclic spectrum $S_x^{\alpha}(f)$. The number of transmissions $N_{\text{sig}}$, their respective carrier frequencies $f_i$ and bandwidths $B_i$ are finally estimated from $S_x^{\alpha}(f)$. The entire processing is performed at a low rate.}
    \label{fig:diagram}
  \end{center}
\end{figure}

\section{Simulation Results}
\label{sec:sim}

We now demonstrate via simulations cyclic spectrum reconstruction from sub-Nyquist samples and investigate the performance of our carrier frequency and bandwidth estimation algorithm. We compare our approach to energy detection and investigate the impact of the sampling rate on the detection performance. Throughout the simulations we use the MWC analog front-end \cite{mwc_hardware} for the sampling stage.

\subsection{Preliminaries}

We begin by explaining how we estimate the elements of $\mathbf{R}_{\mathbf{z}}^a(\tilde{f})$ in (\ref{eq:rza}). The overall sensing time is divided into $P$ time windows of length $Q$ samples. 
We first compute the estimates of $\mathbf{z}_i(\tilde{f}), 1 \le i \le M$ using the fast Fourier transform (FFT) on the samples $\mathbf{z}_i[n]$ over a finite time window. We then estimate the elements of $\mathbf{R}_{\mathbf{z}}^a(\tilde{f})$ as
\begin{equation}
\mathbf{\hat{R}_z}^a(\tilde{f})_{(i,j)}=\frac{1}{P} \sum_{p=1}^{P} \hat{\mathbf{z}}_i^p(\tilde{f})  \hat{\mathbf{z}}_j^p(\tilde{f}+a) ,
\end{equation}
for $a \in [0, f_s]$ and $f \in [0, f_s -a]$. Here, $\hat{\mathbf{z}}_i^p(\tilde{f})$ is the estimate of $\mathbf{z}_i(\tilde{f})$ from the $p$th time window. 

The cyclic spectrum recovery processing is presented here in the frequency domain. The reconstruction can be equivalently performed in the time domain by modulating the slices to replace the frequency shift $\tilde{f}+a$. Then, $\mathbf{\hat{r}_x}^a[n]$ can be recovered using the time equivalent of (\ref{eq:recs})-(\ref{eq:recs0}). However, the carrier frequencies $f_i$ and bandwidths $B_i$ estimation is performed on the cyclic spectrum, in the frequency domain. Thus, the Fourier transform of $\mathbf{\hat{r}_x}^a[n]$ needs to be computed, and $S_x^{\alpha}(f)$ is then mapped using (\ref{eq:mapping}) for $(f, \alpha)$ defined in (\ref{eq:mapping_param}). Therefore, we choose to perform the entire processing in the frequency domain. Another reason to do so is that SBR2 can obviously be performed in the frequency domain only, as opposed to SBR4 which can be carried out both in time and frequency.

We note that in theory, our approach does not require any discretization, neither in the angular frequency $f$ nor in the cyclic frequency $\alpha$. Indeed, $\mathbf{R}_x^{a}(\tilde{f})$ can be computed for any $a \in [0, f_s]$ and $\tilde{f} \in [0, f_s-a]$. This distinguishes our scheme from those based on a transformation between Nyquist and sub-Nyquist samples, where the resolution is theoretically inherent to the problem dimension and dictated by the length of the Nyquist samples vector. In practice, the resolution both in $f$ and $\alpha$ obviously depends on the sensing time and is determined by the number of samples, namely the number of discrete Fourier transform (DFT) coefficients of $\hat{\mathbf{z}}(\tilde{f})$.

We compare our cyclostationary detection to energy detection based on power spectrum recovery, presented in \cite{cohen2013cognitive}. There, it was shown that power spectrum sensing outperforms spectrum sensing, namely energy detection performed on the recovered signal itself. This power spectrum reconstruction approach is a special case of ours for $a=0$, when only the matrix $\mathbf{R}_x^0(\tilde{f})$ is considered and only its diagonal is reconstructed. Therefore, we compare our detection approach performed on $S_x^{\alpha}(f)$ for $\alpha \neq 0$ to energy detection carried out on $S_x^{\alpha}(f)$, for $\alpha=0$, corresponding to the diagonal elements of $\mathbf{R}_x^0(\tilde{f})$.

Throughout the simulations we consider additive white Gaussian noise (AWGN) $n(t)$. The SNR is defined as the ratio between the power of the wideband signal and that of the wideband noise as follows
\begin{equation}
\text{SNR}=\frac{\sum_{i=1}^{N_{\text{sig}}}||s_i(t)||^2}{||n(t)||^2}.
\end{equation}

\subsection{Cyclic Spectrum Recovery}

We first illustrate cyclic spectrum reconstruction from sub-Nyquist samples. We consider $x(t)$ composed of $N_{\text{sig}}=3$ AM transmissions. Each transmission has bandwidth $B=80$MHz and the carrier frequencies are drawn uniformly at random in $[0, \frac{f_{\text{Nyq}}}{2}]$, with $f_{\text{Nyq}}=6.4$GHz. The SNR is set to $-5$dB. In the sampling stage, we use the MWC with $M=11$ channels, each sampling at $f_s=95$MHz. The overall sampling rate is therefore $1.05$GHz, that is $2.2$ times the Landau rate and $16\%$ of the Nyquist rate. Here, the theoretical minimal sampling rate is $f_{\text{min}}=768$MHz.
Figure~\ref{fig:spec1} shows the original and reconstructed power spectrum of $x(t)$, namely $S_x^0(f)$. In this experiment, the carriers are $f_1=97$MHz, $f_2=573$MHz and $f_3=1.4$GHz. We observe that the recovery of the power spectrum failed; some occupied bands were not reconstructed and others that only contained noise were identified as active. This is due to the poor performance of energy detection in low SNR regimes. In can be seen in Fig.~\ref{fig:spec1} that the signal spectrum is not buried in noise and the transmissions would have been perfectly detected using energy detection on Nyquist samples. However, in sub-Nyquist regimes, the aliasing decreases the SNR \cite{Castro}, as can be seen in Fig.~\ref{fig:samples}, which shows the DFT of the samples of one channel. As a consequence, energy detection failed in this sub-Nyquist regime.
\begin{figure}[tb]
  \begin{center}
    \includegraphics[width=0.9\columnwidth]{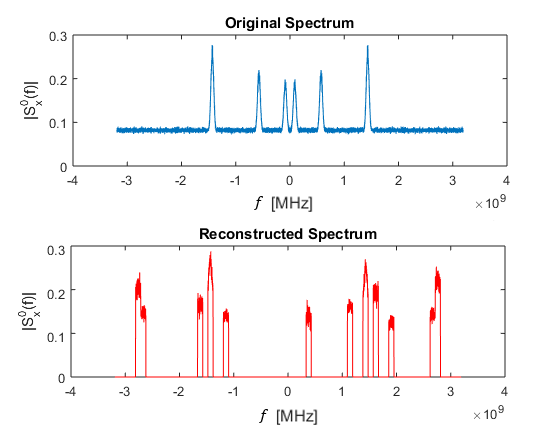}
    \caption{Original (top) and reconstructed (bottom) power spectrum ($\alpha = 0$).}
    \label{fig:spec1}
  \end{center}
\end{figure}

\begin{figure}[tb]
  \begin{center}
    \includegraphics[width=0.9\columnwidth]{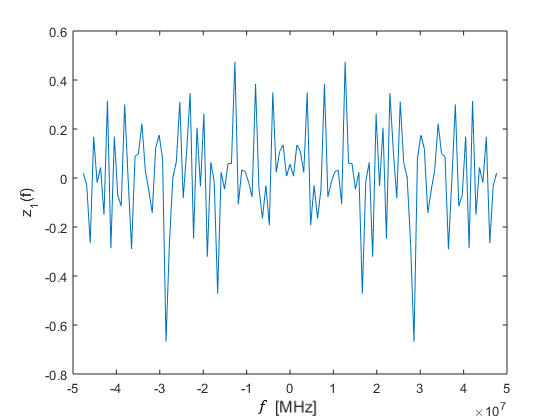}
    \caption{Sub-Nyquist samples in the first channel in the frequency domain.}
    \label{fig:samples}
  \end{center}
\end{figure}
The reconstructed cyclic spectrum of $x(t)$ is presented in Fig.~\ref{fig:cyc2d1} (the reader is referred to the colored version for a clearer figure), where we observe that the noise contribution is concentrated at $\alpha=0$ while it is significantly lower at the non zero cyclic frequencies. For clarity, we focus on the one-dimensional section of $S_x^{\alpha}(f)$ for $f=0$, shown in Fig.~\ref{fig:cyc1d1}. It can be clearly seen that the highest peaks (at least $6$dB above the lower peaks) are located at $\alpha_i=2f_i$ for all 3 active transmissions. This illustrates the advantage of cyclostationary detection in comparison with energy detection.
\begin{figure}[tb]
  \begin{center}
    \includegraphics[width=0.9\columnwidth]{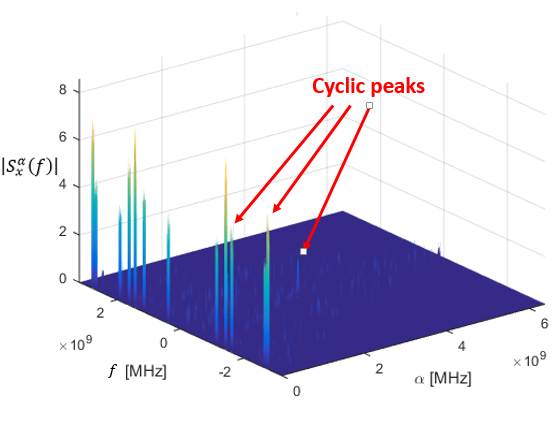}
    \caption{Reconstructed cyclic spectrum.}
    \label{fig:cyc2d1}
  \end{center}
\end{figure}
\begin{figure}[tb]
  \begin{center}
    \includegraphics[width=0.9\columnwidth]{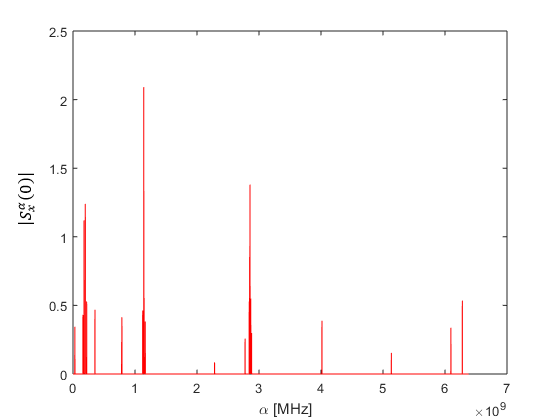}
    \caption{Reconstructed cyclic spectrum for $f=0$, $S_x^{\alpha}(0)$, as a function of the cyclic frequency $\alpha$.}
    \label{fig:cyc1d1}
  \end{center}
\end{figure}

Next, we perform cyclostationary detection on the reconstructed cyclic spectrum. 
We compare the performance of cyclostationary and energy detection performed on the reconstructed cyclic and power spectrum, respectively. For cyclostationary detection, we use a single-cycle detector which computes the energy at several frequencies around $f=0$ and at a single cyclic frequency $\alpha$. In the simulations, we consider AM modulated signals. We address a blind scenario where the carrier frequencies of the signals occupying the wideband channel are unknown and we have $N_{\text{sig}}=3$ potentially active transmissions, with single-sided bandwidth $B=100$MHz. For each iteration, the alternative and null hypotheses define the presence or absence of one out of the $N_{\text{sig}}$ transmissions. We refer to that transmission as the signal of interest.
The Nyquist rate of $x(t)$ is $f_{\text{Nyq}}=10GHz$. We consider $N=64$ spectral bands and $M=7$ analog channels, each sampling at $f_s=156$ MHz. The overall sampling rate is $Mf_s=1.09$GHz which is $182\%$ of the Landau rate and $10.9 \%$ of the Nyquist rate. Here, the theoretical minimal sampling rate is $f_{\text{min}}=960$MHz. The receiver operating characteristic (ROC) curve is shown in Fig. \ref{fig:sim3} for different SNR regimes (the averages were performed over $P=15$ time windows). Detection occurs if the presence of the signal of interest is correctly detected while false alarm is declared if a detection is claimed while the signal of interest is absent. It can be seen that cyclostationary detection outperforms energy detection in low SNR regimes, as expected. This is already known in the Nyquist regime and is now shown on samples obtained at a sub-Nyquist rate.

\begin{figure}[tb]
  \begin{center}
    \includegraphics[width=0.9\columnwidth]{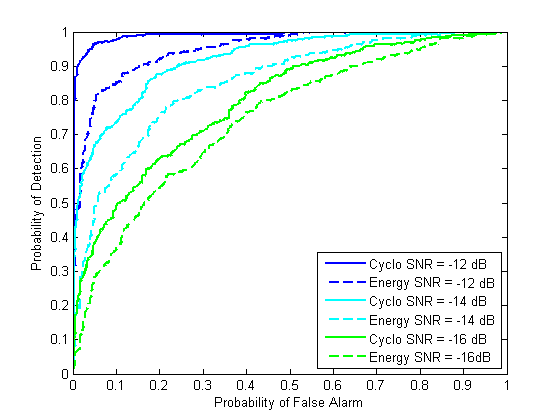}
    \caption{ROC for both energy and cyclostationary detection.}
    \label{fig:sim3}
  \end{center}
\end{figure}

\subsection{Carrier Frequencies and Bandwidths Recovery}

We now demonstrate carrier frequency and bandwidth estimation from sub-Nyquist samples. We first illustrate our algorithm process on a specific experiment. We consider $x(t)$ composed of $N_{\text{sig}}=3$ BPSK transmissions, which have cyclic features at the locations $(f, \alpha)=(0, \pm 2f_i), (\pm f_i, \pm \frac{1}{T_i})$, where $f_i$ is the carrier frequency and $T_i$ is the symbol period of the $i$th transmission \cite{GardnerBook}. 
Each transmission has bandwidth $B_i=18$MHz and symbol rate $T_i=1/B_i=0.56 \mu s$, and the carrier frequencies are drawn uniformly at random in $[0, \frac{f_{\text{Nyq}}}{2}]$, with $f_{\text{Nyq}}=1$GHz. In this experiment, the selected carriers are $f_1=163.18$MHz, $f_2=209.69$MHz and $f_3=396.12$MHz. The SNR is set to $-5$dB. In the sampling stage, we use the MWC with $M=9$ channels, each sampling at $f_s=23.26$MHz. The overall sampling rate is therefore $210$MHz, that is a little below twice the Landau rate and $21\%$ of the Nyquist rate. Here, the theoretical minimal sampling rate is $f_{\text{min}}=172.8$MHz.

Figure~\ref{fig:est_psd} presents the original and reconstructed power spectrum using $P=100$ time windows. We observe that the signal's spectrum was not perfectly recovered due to the noise. The reconstructed cyclic spectrum, including the power spectrum, estimated over $P=100$ time windows as well, is shown in Fig.~\ref{fig:est_cyc} and the section corresponding to $f=0$ can be seen in Fig.~\ref{fig:est_alpha}. The cyclic peaks at the locations $(f, \alpha)=(0, \pm 2f_i)$, for $i=1,2,3$ can be observed in both figures. In Fig.~\ref{fig:est_clust}, we illustrate the clustering stage of our algorithm as a function of $\alpha$ for $f=0$. The estimated number of clusters is $6$, yielding a correctly estimated number of signals $\hat{N}_{\text{sig}}=3$. The estimated carrier frequencies using cyclostationary based estimation are $\hat{f}_1= 162.66$MHz, $\hat{f}_2=209.19$MHz and $\hat{f}_3=395.11$MHz, and the corresponding estimated bandwidths are $\hat{B}_1=17.4$MHz, $\hat{B}_2=17.4$MHz and $\hat{B}_3=17.0$MHz. Using energy based estimation, we obtain $\hat{N}_{\text{sig}}=5$ signals, with estimated carrier frequencies $\hat{f}_1=93.04$MHz, $\hat{f}_2= 162.82$MHz, $\hat{f}_3=255.86$MHz and $\hat{f}_4=383.89$MHz, $\hat{f}_5=465.21$MHz and estimated bandwidths $\hat{B}_1=\hat{B}_2=\hat{B}_3=\hat{B}_5=23.1$MHz, $\hat{B}_4=46.3$MHz. Clearly, cyclostationary detection succeeded where energy detection failed.

\begin{figure}[tb]
  \begin{center}
    \includegraphics[width=0.9\columnwidth]{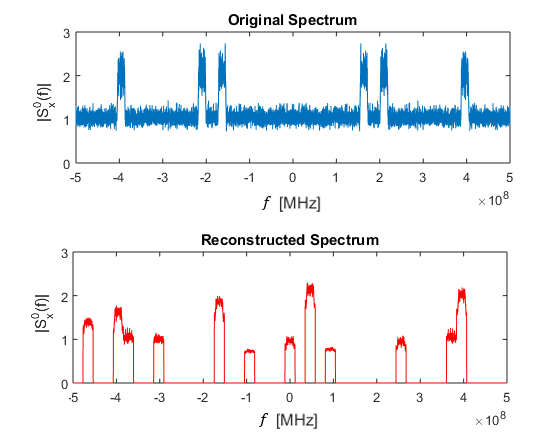}
    \caption{Original and reconstructed power spectrum.}
    \label{fig:est_psd}
  \end{center}
\end{figure}

\begin{figure}[tb]
  \begin{center}
    \includegraphics[width=0.9\columnwidth]{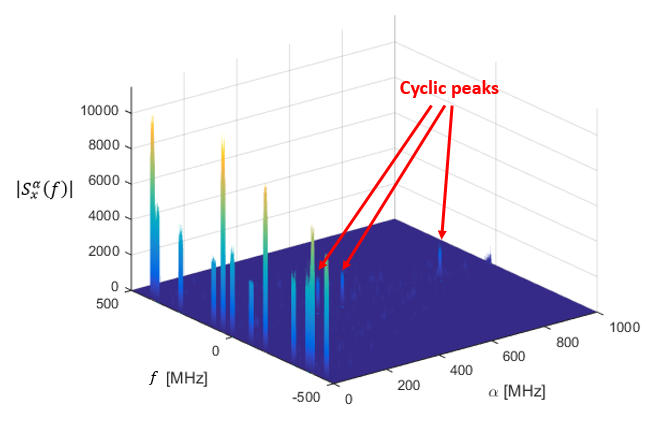}
    \caption{Reconstructed cyclic spectrum.}
    \label{fig:est_cyc}
  \end{center}
\end{figure}

\begin{figure}[tb]
  \begin{center}
    \includegraphics[width=0.9\columnwidth]{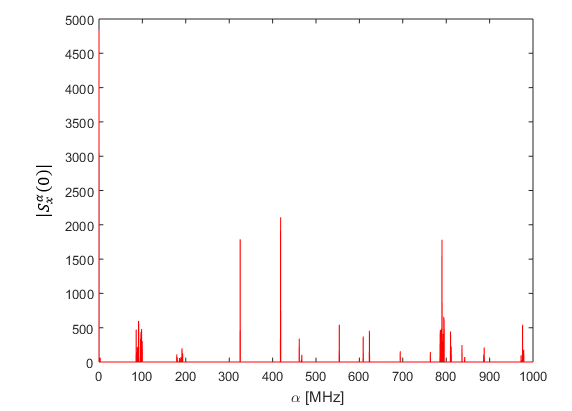}
    \caption{Reconstructed cyclic spectrum for $f=0$, $S_x^{\alpha}(0)$, as a function of the cyclic frequency $\alpha$.}
    \label{fig:est_alpha}
  \end{center}
\end{figure}

\begin{figure}[tb]
  \begin{center}
    \includegraphics[width=0.9\columnwidth]{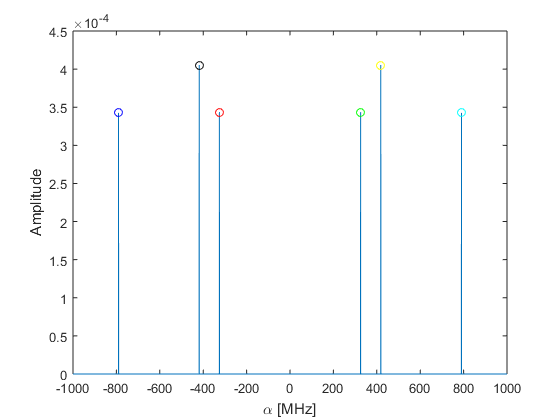}
    \caption{Clustering with $k=6$.}
    \label{fig:est_clust}
  \end{center}
\end{figure}

We now investigate the performance of our carrier frequency and bandwidth estimation algorithm from sub-Nyquist samples with respect to SNR and compare it to energy detection. We consider $x(t)$ composed of $N_{\text{sig}}=3$ BPSK transmissions with identical parameters as in the previous section. The sampling parameters remain the same as well. In each experiment, we draw the carrier frequencies uniformly at random and generate the transmissions. The results are averaged over $1000$ realizations. 

Figure.~\ref{fig:pd} shows the probability of detection of both cyclostationary (blue) and energy (red) detection. A detection is reported if the distance between the true and recovered carrier frequencies is below 10 times the frequency resolution, which is equal to $0.388$MHz. The average number of false alarms, namely unoccupied bands that are labeled as detection, is shown in Fig.~\ref{fig:pfa}. Clearly, cyclostationarity outperforms the energy approach in terms of probability of detection. Cyclostationary detection also yields fewer false alarms. For high SNRs, the gap between the performance of both schemes is small, since energy detection still succeeds in these regimes. This gap widens with SNR decrease, where the advantage of cyclostationary detection is clearly marked. The curves for both cyclostationary and energy detection show a rapid decrease of performance below a certain SNR level. We note that this level is lower for cyclostationary detection. This behavior is common to CS based recovery algorithms, which fail in the presence of large noise and yield wrong signal support, leading to misdetections and false alarms. When the SNR becomes too low, cyclostationary detection fails as well, due to the finite sensing time and averaging.

\begin{figure}[tb]
  \begin{center}
    \includegraphics[width=0.9\columnwidth]{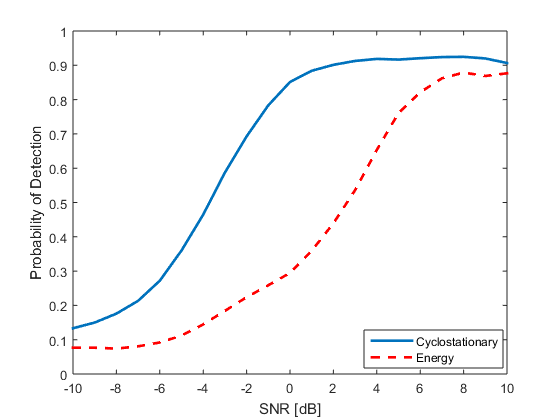}
    \caption{Probability of detection - cyclostationary vs. energy detection.}
    \label{fig:pd}
  \end{center}
\end{figure}

\begin{figure}[tb]
  \begin{center}
    \includegraphics[width=0.9\columnwidth]{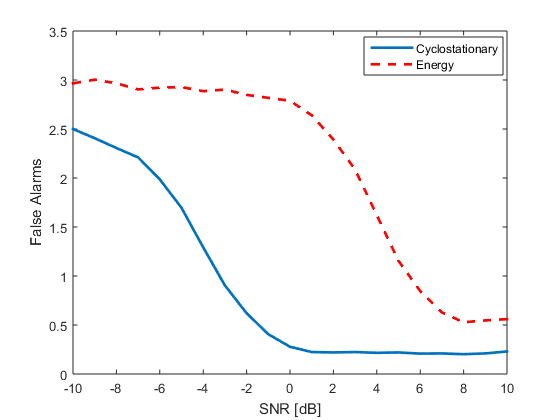}
    \caption{False alarm - cyclostationary vs. energy detection.}
    \label{fig:pfa}
  \end{center}
\end{figure}

%
Next, we compare cyclic spectrum reconstruction from Nyquist and sub-Nyquist samples. We consider the same parameters as above for the signal generation and the sampling front-end. From Fig.~\ref{fig:nyqsub}, which shows the detection performance in both regimes, it can be seen that the gap between them is not large. The loss in performance due to the reduced number of samples is small since it is compensated by cyclostationary detection, which is robust to noise. 
\begin{figure}[tb]
  \begin{center}
    \includegraphics[width=0.9\columnwidth]{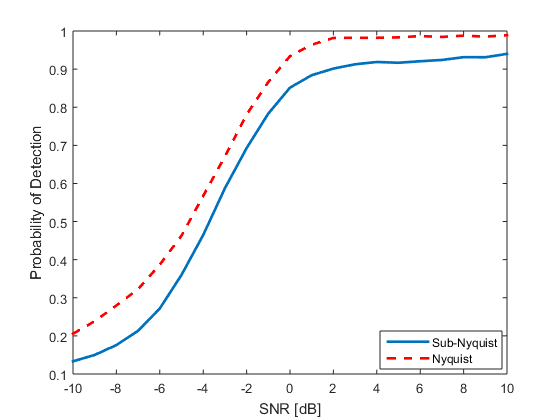}
    \caption{Probability of detection - sub-Nyquist vs. Nyquist sampling.}
    \label{fig:nyqsub}
  \end{center}
\end{figure}

In Fig.~\ref{fig:rate}, we wish to validate the derived theoretical minimal sampling rate. In the settings described above, the lower bound is $f_{\text{min}}=172.8$MHz, which corresponds to a minimal number of channels $M_{\text{min}}=10$ for perfect cyclic spectrum recovery. It can be seen in the figure that beyond $10$ channels, the probability of detection is close to $1$ in the noiseless regime. Detection errors are due to the finite sensing time and averaging. In the presence of noise, the probability of detection is slightly lower and the number of channels required to reach its maximal value is higher. Below $10$ channels, the cyclic spectrum cannot be perfectly recovered and the detection performance decreases with the number of channels.
\begin{figure}[tb]
  \begin{center}
    \includegraphics[width=0.9\columnwidth]{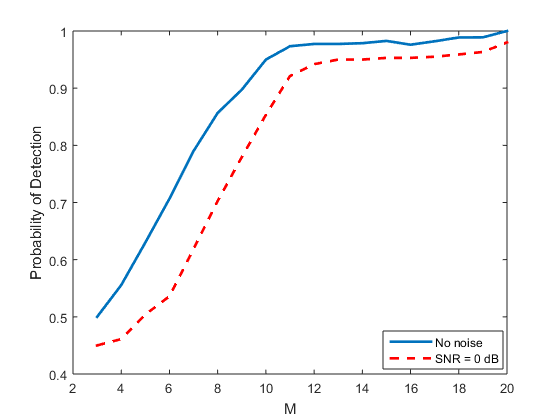}
    \caption{Probability of detection - sampling rate.}
    \label{fig:rate}
  \end{center}
\end{figure}

Finally, we compare the recovery performance of our structured OMP presented in Algorithm \ref{alg:OMP} with the traditional OMP. Here, we consider $N_{\text{sig}}=4$ transmissions and $M=14$ sampling channels. The remaining parameters are identical to those in the previous experiments. The added performance of exploiting the structure of the correlation matrices can be observed in Fig.~\ref{fig:struct} above a certain SNR value.
\begin{figure}[tb]
  \begin{center}
    \includegraphics[width=0.9\columnwidth]{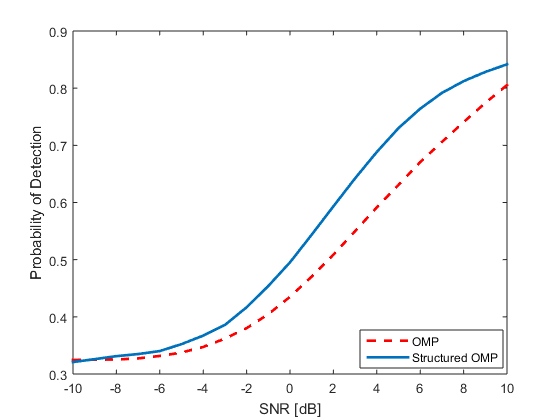}
    \caption{Probability of detection - structured OMP vs. traditional OMP.}
    \label{fig:struct}
  \end{center}
\end{figure}

\section{Conclusion}
In this paper, we considered cyclostationary detection in a sub-Nyquist regime, to cope with efficiency and robustness requirements for spectrum sensing in the context of CR. We presented a cyclic spectrum reconstruction algorithm from sub-Nyquist samples along with recovery conditions for both sparse and non sparse signals. We showed that even if the signal is not sparse, its cyclic spectrum can be recovered from samples obtained below the Nyquist rate. The minimal rates obtained for both the sparse and non sparse cases are found to be higher than those required for power spectrum recovery and lower than the rates required for signal reconstruction. Once the cyclic spectrum is recovered, we applied our feature extraction algorithm that estimates the number of transmissions and their respective carrier frequency and bandwidth. Simulations performed at low SNRs validate that cyclostationary detection outperforms energy detection in the sub-Nyquist regime, as well as the theoretical lower sampling bound.

\bibliographystyle{IEEEtran}
\bibliography{IEEEabrv,CompSens_ref}

\begin{thebibliography}{10}
\providecommand{\url}[1]{#1}
\csname url@samestyle\endcsname
\providecommand{\newblock}{\relax}
\providecommand{\bibinfo}[2]{#2}
\providecommand{\BIBentrySTDinterwordspacing}{\spaceskip=0pt\relax}
\providecommand{\BIBentryALTinterwordstretchfactor}{4}
\providecommand{\BIBentryALTinterwordspacing}{\spaceskip=\fontdimen2\font plus
\BIBentryALTinterwordstretchfactor\fontdimen3\font minus
  \fontdimen4\font\relax}
\providecommand{\BIBforeignlanguage}[2]{{%
\expandafter\ifx\csname l@#1\endcsname\relax
\typeout{** WARNING: IEEEtran.bst: No hyphenation pattern has been}%
\typeout{** loaded for the language `#1'. Using the pattern for}%
\typeout{** the default language instead.}%
\else
\language=\csname l@#1\endcsname
\fi
#2}}
\providecommand{\BIBdecl}{\relax}
\BIBdecl

\bibitem{Urkowitz_energy}
H.~Urkowitz, ``Energy detection of unknown deterministic signals,'' \emph{Proc.
  IEEE}, vol.~55, pp. 523--531, Apr. 1967.

\bibitem{MF1}
D.~{O. North}, ``An analysis of the factors which determine signal/noise
  discrimination in pulsed carrier systems,'' \emph{Proc. IEEE}, vol.~51, pp.
  1016--1027, Jul. 1963.

\bibitem{MF2}
G.~{L. Turin}, ``An introduction to matched filters,'' \emph{IRE Trans. Inf.
  Theory}, vol.~6, pp. 311--329, Jun. 1960.

\bibitem{Gardner_review}
W.~A. Gardner, A.~Napolitano, and L.~Paura, ``Cyclostationarity: Half a century
  of research,'' \emph{Signal Process.}, vol.~86, pp. 639--697, Apr. 2006.

\bibitem{cyclo_review3}
A.~Napolitano, ``Cyclostationary: New trends and applications,'' \emph{Signal
  Process.}, vol. 120, pp. 385--408, Jan. 2016.

\bibitem{GardnerBook}
W.~Gardner, \emph{Statistical spectral analysis: a non probabilistic
  theory}.\hskip 1em plus 0.5em minus 0.4em\relax Prentice Hall, 1988.

\bibitem{Mitola}
J.~Mitola, ``Software radios: Survey, critical evaluation and future
  directions,'' \emph{IEEE Aerosp. Electron. Syst. Mag}, vol.~8, pp. 25--36,
  Apr. 1993.

\bibitem{cr_review}
E.~Axell, G.~Leus, E.~Larsson, and H.~Poor, ``Spectrum sensing for cognitive
  radios: State-of-the-art and recent advances,'' \emph{IEEE Signal Process.
  Mag.}, vol.~29, pp. 101--116, May 2012.

\bibitem{Study1}
FCC, ``Spectrum policy task force report: Federal communications commission,
  tech. rep. 02-135. [online],''
  \emph{http://www.gov/edocs\_public/attachmatch/DOC\-228542A1.pdf}, Nov. 2002.

\bibitem{Study2}
M.~McHenry, ``{NSF} spectrum occupancy measurements project summary. shared
  spectrum co., tech. rep. [online],'' \emph{http://www.sharedspectrum.com},
  Aug. 2005.

\bibitem{study3}
R.~I.~C. Chiang, G.~B. Rowe, and K.~W. Sowerby, ``A quantitative analysis of
  spectral occupancy measurements for cognitive radio,'' \emph{IEEE Vehicular
  Technology Conf.}, pp. 3016--3020, Apr. 2007.

\bibitem{Mishali_multicoset}
M.~{Mishali} and Y.~{C. Eldar}, ``Blind multi-band signal reconstruction:
  Compressed sensing for analog signals,'' \emph{IEEE Trans. Signal Process.},
  vol.~57, pp. 993--1009, Mar. 2009.

\bibitem{Mishali_theory}
------, ``From theory to practice: Sub-{N}yquist sampling of sparse wideband
  analog signals,'' \emph{{IEEE} J. Sel. Topics Signal Process.}, vol.~4, pp.
  375--391, Apr. 2010.

\bibitem{MagazineMishali}
------, ``Sub-{N}yquist sampling: Bridging theory and practice,'' \emph{IEEE
  Signal Process. Mag.}, vol.~28, pp. 98--124, Nov. 2011.

\bibitem{SamplingBook}
Y.~{C. Eldar}, \emph{Sampling Theory: Beyond Bandlimited Systems}.\hskip 1em
  plus 0.5em minus 0.4em\relax Cambridge University Press, 2015.

\bibitem{Davies}
M.~{A. Lexa}, M.~{E. Davies}, J.~{S. Thompson}, and J.~{Nikolic}, ``Compressive
  power spectral density estimation,'' \emph{IEEE Int. Conf. Acoustics, Speech
  and Signal Process.}, vol.~57, pp. 22--27, May 2011.

\bibitem{Leus}
D.~{D. Ariananda} and G.~{Leus}, ``Compressive wideband power spectrum
  estimation,'' \emph{IEEE Trans. Signal Process.}, vol.~60, pp. 4775--4789,
  Sept. 2012.

\bibitem{cohen2013cognitive}
D.~Cohen and Y.~C. Eldar, ``Sub-{N}yquist sampling for power spectrum sensing
  in cognitive radios: A unified approach,'' \emph{IEEE Trans. Signal
  Process.}, vol.~62, pp. 3897--3910, Aug. 2014.

\bibitem{Castro}
E.~{Arias-Castro} and Y.~{C. Eldar}, ``Noise folding in compressed sensing,''
  \emph{IEEE Signal Process. Lett.}, vol.~18, pp. 478--481, Aug. 2011.

\bibitem{cyclo_review1}
C.~{M. Spooner} and R.~{B. Nicholls}, \emph{Spectrum sensing based on spectral
  correlation, in: B. Fette ({E}d.), Cognitive Radio Technology, 2nd ed.
  (Chapter 18)}.\hskip 1em plus 0.5em minus 0.4em\relax Elsevier, 2009.

\bibitem{cyclo_review2}
S.~Haykin, D.~{J. Thomson}, and J.~{H. Reed}, ``Spectrum sensing for cognitive
  radio,'' \emph{Proc. IEEE}, vol.~97, pp. 849--877, May 2009.

\bibitem{ICC12}
E.~{Rebeiz}, V.~{Jain}, and D.~{Cabric}, ``Cyclostationarity-based low
  complexity wideband spectrum sensing using compressive sampling,'' \emph{IEEE
  Int. Conf. Commun.}, pp. 1619--1623, Jun. 2012.

\bibitem{Palicot1}
Z.~Khalaf, A.~Nafkha, and J.~Palicot, ``Blind cyclostationary feature detector
  based on sparsity hypotheses for cognitive radio equipment,'' \emph{IEEE Int.
  Midwest Symposium Circuits and Syst.}, pp. 1--4, Aug. 2011.

\bibitem{Palicot2}
------, ``Blind spectrum detector for cognitive radio using compressive
  sensing,'' \emph{IEEE Global Comm. Conf.}, pp. 1--5, Dec. 2011.

\bibitem{TianFeature}
Z.~Tian, ``Cyclic feature based wideband spectrum sensing using compressive
  sampling,'' \emph{IEEE Int. Conf. Commun.}, pp. 1--5, Jun. 2011.

\bibitem{TianFeatureJ}
Z.~Tian, Y.~Tafesse, and B.~{M. Sadler}, ``Cyclic feature detection with
  sub-{N}yquist sampling for wideband spectrum sensing,'' \emph{IEEE J. Sel.
  Topics Signal Process.}, vol.~6, pp. 58--69, Feb. 2012.

\bibitem{cabric_cyc}
S.~{A. Razavi}, M.~Valkama, and D.~Cabric, ``High-resolution cyclic spectrum
  reconstruction from sub-{N}yquist samples,'' \emph{IEEE Int. Workshop Signal
  Process. Advances Wireless Comm.}, pp. 250--254, Jun. 2013.

\bibitem{TianLeus}
G.~Leus and Z.~Tian, ``Recovering second-order statistics from compressive
  measurements,'' \emph{IEEE Int. Workshop Computational Advances in
  Multi-Sensor Adaptive Process.}, pp. 337--340, Dec. 2011.

\bibitem{LeusCyclo}
D.~{D. Ariananda} and G.~Leus, ``Non-uniform sampling for compressive cyclic
  spectrum reconstruction,'' \emph{IEEE Int. Conf. Acoustics, Speech and Signal
  Process.}, pp. 41--45, May 2014.

\bibitem{McClellan}
L.~Zhu, C.~{E. Luo}, and J.~{H. McClellan}, ``Cyclostationary-based wideband
  spectrum sensing using random sampling,'' \emph{IEEE Global Conf. Signal and
  Inf. Process.}, pp. 1202--1205, Dec. 2013.

\bibitem{CSBook}
Y.~{C. Eldar} and G.~Kutyniok, \emph{Compressed Sensing: Theory and
  Applications}.\hskip 1em plus 0.5em minus 0.4em\relax Cambridge University
  Press, 2012.

\bibitem{McClellanSAR}
C.~{E. Luo} and J.~{H. McClellan}, ``Compressive sampling with a successive
  approximation {ADC} architecture,'' \emph{IEEE Int. Conf. Acoustics, Speech
  and Signal Process.}, pp. 3920--3923, May 2011.

\bibitem{mwc_hardware}
M.~{Mishali}, Y.~{C. Eldar}, O.~Dounaevsky, and E.~Shoshan, ``Xampling: Analog
  to digital at sub-nyquist rates,'' \emph{IET Circuits, Devices \& Syst.},
  vol.~5, pp. 8--20, Jan. 2011.

\bibitem{LandauCS}
H.~Landau, ``Necessary density conditions for sampling and interpolation of
  certain entire functions,'' \emph{Acta Math}, vol. 117, pp. 37--52, Jul.
  1967.

\bibitem{liad_cyclo}
D.~{Cohen}, L.~{Pollak}, and Y.~{C. Eldar}, ``Carrier frequency and bandwidth
  estimation of cyclostationary multiband signals,'' \emph{IEEE Int. Conf.
  Acoustics, Speech and Signal Process.}, pp. 3716--3720, Mar. 2016.

\bibitem{CommBook}
J.~{R. Barry}, E.~{A. Lee}, and D.~{G. Messerschmitt}, \emph{Digital
  Communication}.\hskip 1em plus 0.5em minus 0.4em\relax Springer, 2003.

\bibitem{Papoulis}
A.~Papoulis, \emph{Probability, Random Variables, and Stochastic
  Processes}.\hskip 1em plus 0.5em minus 0.4em\relax McGraw Hill, 1991.

\bibitem{lim2015compressive}
C.~W. Lim and M.~B. Wakin, ``Compressive temporal higher order cyclostationary
  statistics,'' \emph{IEEE Trans. Signal Process.}, vol.~63, pp. 2942--2956,
  Jun. 2015.

\bibitem{Bresler}
R.~{Venkataramani} and Y.~{Bresler}, ``Perfect reconstruction formulas and
  bounds on aliasing error in sub-{N}yquist nonuniform sampling of multiband
  signals,'' \emph{IEEE Trans. Inf. Theory}, vol.~46, pp. 2173 -- 2183, Sept.
  2000.

\bibitem{mishali2008reduce}
M.~Mishali and Y.~C. Eldar, ``Reduce and boost: Recovering arbitrary sets of
  jointly sparse vectors,'' \emph{IEEE Trans. Signal Process.}, vol.~56, pp.
  4692--4702, Oct. 2008.

\bibitem{elbow}
R.~{L. Thorndike}, ``Who belong in the family?'' \emph{Psychometrika}, vol.~18,
  pp. 267--276, Dec. 1953.

\end{thebibliography}

\end{document}